\documentclass[10pt, conference, letterpaper]{IEEEtran}

% \makeatletter
% \def\ps@headings{%
% \def\@oddhead{\mbox{}\scriptsize\rightmark \hfil \thepage}%
% \def\@evenhead{\scriptsize\thepage \hfil \leftmark\mbox{}}%
% \def\@oddfoot{}
% \def\@evenfoot{}}
% \def\blfootnote{\xdef\@thefnmark{}\@footnotetext}
% \makeatother

%\pagestyle{headings}

\usepackage{cite}
\usepackage{url}
\usepackage{color}
\usepackage{amsthm}

\usepackage{graphicx}
\usepackage{algorithm}
\usepackage[noend]{algorithmic}
\usepackage{subfigure}
\usepackage{amssymb, amsmath,graphicx,charter, latexsym}
\usepackage{enumerate}

\newtheorem{definition}{Definition}
\newtheorem{lemma}{Lemma}

\newtheorem{theorem}{Theorem}

\begin{document}
\title{Broadcasting Real-Time Flows in Integrated Backhaul and Access 5G Networks}
\author{Aria HasanzadeZonuzy$^*$,
I-Hong Hou$^*$ and
Srinivas Shakkottai$^*$\\
$^*$Dept. of Electrical and Computer Engineering, Texas A\&M University\\
{\small Email:\{azonuzy, ihou, sshakkot\}@tamu.edu}}

%\IEEEauthorblockN{Aria HasanzadeZonuzy}
%\IEEEauthorblockA{I-Hong Hou}
% }
\maketitle

\begin{abstract}
This paper studies the problem of broadcasting real-time flows in multi-hop wireless networks.  We consider that each packet has a stringent deadline, and each node in the network obtains some utility based on the number of packets delivered to it on time for each flow.  We propose a distributed protocol, the delegated-set routing (DSR) protocol, that incurs virtually no overhead of coordination among nodes. We also develop distributed algorithms that aim to maximize the total system utility under DSR.  The utility of our DSR protocol and distributed algorithms are demonstrated by both theoretical analysis and simulation results, where we show that our algorithms achieve better performance even when compared against centralized throughput optimal policies.
\end{abstract}

\section{Introduction}
\label{section:introduction}

Mutli-hop broadcasting in wireless networks, which entails disseminating information to every device in the system via retransmissions at multiple nodes, is an important mechanism to coordinate devices in networked systems.  Furthermore, many applications of broadcast communications are safety-critical, and timely deliveries of information is crucial to maintain the robustness and safety of the system.  For example, multi-hop broadcasting is needed to disseminate timely safety information among connected vehicles in vehicular ad hoc networks (VANETs), to announce control decisions in networked control systems and Internet of Things (IoT), and to exchange locations and flight paths among unmanned aerial vehicles (UAVs) for Unmanned Aircraft System Traffic Management (UTM).  %Since these emerging networked systems will be deployed over large areas, multi-hop transmissions are  often needed to disseminate information to devices that are far from the source. %Furthermore, many applications of networked systems are safety-critical, and timely deliveries of information is critical to maintain the robustness and safety of the system.

The cellular infrastructure that will enable these time-critical broadcast wireless applications will be 5G networks that are currently being designed to support ultra-low latency, ultra-high throughput communications.   These networks will utilize the highly directional and high bandwidth mm-wave band, which suffers from high attenuation and sensitively to fading.  This requires the relatively dense deployment of small base stations at spacings of about $250$ m.  However, providing fiber backhaul to all of these base stations is prohibitively expensive.  An important development in this context is \textit{Integrated Access and Backhaul} (IAB) \cite{iab,iab2}, under which there are a few base stations with fiber backhaul that act as gateways to many others that are connected via a mm-wave wireless mesh backhaul.  This mm-wave backhaul creates a directional wireless network between the nodes, but routing across these is highly dynamic and subject to the vagaries of the wireless channel.  The same mm-wave spectrum also is used to provide access to end-users, i.e., both access and backhaul are integrated over mm-wave.

% On the other hand, one of the most important multi-hop wireless networks is \textit{Integrated Access and Backhaul} (IAB) networks \cite{iab}. In these networks, there are mixture of fiber-connected and non-fiber-connected base stations. One of the data going through this network is VR gaming that most of the content of a multi-player games are identical to players. Besides, the players would be spread over the network. Thus, broadcasting the common content of the data through the network would be more efficient.}

Motivated by the above features of emerging networks, this paper studies the problem of designing algorithms for broadcasting real-time flows with strict per-packet end-to-end deadlines in directional wireless mesh networks.  Here, real-time flow imposes a strict deadline for each of its packets, and packets that cannot be delivered before their respective deadlines are dropped from the system.   From the IAB perspective, our goal is to ensure that each broadcast packet is delivered to an appropriate IAB base station before its deadline, at which point it is immediately transmitted to its respective end user.  Each IAB node in the network then obtains some utility based on the time-average number of on-time packets that it receive from each flow. The goal of this paper is to maximize the total timely-utility of the whole network.

% Each node in the network obtains some utility based on the time-average number of packets that it receives on time from each flow. The goal of this paper is to maximize the total utility of the whole network. 

% in the emerging millimeter wave band (mmWave).  The mmWave band has the potential to achieve very high link capacity due to its large bandwidth opportunity, and is expected to be an important band for fifth-generation (5G) wireless networks.  The use of multi-user multiple-input and multiple-output (MU-MIMO) \cite{mimo,mimo2,mimo3} devices along with beamforming implies that transmission and reception from multiple nodes can occur simultaneously using highly directional (but lossy) links.  Each real-time flow imposes a strict deadline for each of its packets, and packets that cannot be delivered before their respective deadlines are dropped from the system. Each node in the network obtains some utility based on the time-average number of packets that it receives on time from each flow. The goal of this paper is to maximize the total utility of the whole network.

There are several important challenges that need to be addressed for broadcasting real-time flows in such multi-hop mmWave networks. First, since it is difficult to coordinate a large network in real-time, centralized algorithms that require the instant knowledge of the state of each node and packet are usually infeasible to implement.  Hence, we need distributed algorithms, where each node makes decisions using its local information.  Second, as mentioned above, transmissions in the mmWave band can be unreliable.  Finally, broadcasting algorithms need to explicitly address the deadline requirement of each flow.

\subsection*{Main Results and Organization}
In this paper, we propose a new protocol for broadcasting in multi-hop mmWave networks, namely, the \emph{delegated-set routing} (DSR) protocol.  DSR has two important features: First, it is a distributed protocol where all the required coordination among nodes can be conveyed in the headers of packets once the topology of the network is known. Hence, there is virtually no overhead of coordination after topology creation process.  Second, DSR allows each node to dynamically change its transmission strategies based on the deadlines of its packets and random events, such as transmission failures, it experiences. 

Relaxing the link utilization constraint (number of tranmissions allowed per time slot) to an average one, and using dual decomposition techniques, we also propose a distributed algorithm that aims to maximize the total system-wide utility under DSR.  This algorithm only requires minimal and infrequent information exchange among nodes.  We analytically prove that our algorithm achieves the optimal total utility under an average link capacity constraint. The key novelty lies in a natural decomposition into packet-by-packet and link-by-link updates that need minimal coordination.  These lead to a steepest-ascent-type control associated with each packet, and a sub-gradient type of update at links.  This algorithm also gives rise to a simple index policy when link utilization constraints of all links need to be satisfied at every instant.

We evaluate our algorithms through simulations on representative network graphs.  We compare our algorithms against recent studies on throughput optimal algorithms, including one that is designed specifically for broadcast, and one that is universal in terms of being able to support unicast, multicast and broadcast.  We show that despite some of these algorithms being centralized and complex, our algorithm, which is designed specifically for simplicity and delay optimality, achieves better performance.

The paper is organized as follows. 
Section \ref{section:related} reviews existing studies on broadcasting and multi-hop networks. Section \ref{section:model} describes our system model for multi-hop networks with real-time broadcast flows. 
%Section \ref{section:challenges} discusses an optimal centralized solution based on Markov Decision Process, and demonstrates such a solution is infeasible to implement in actual networks. 
%Section \ref{section:model} discusses the system model and goals, and the intractability of an Markov Decision Process (MDP) based solution.
Section \ref{section:structure} describes the additional structure imposed by the DSR protocol, as well as an epoch-wise approach to policy selection.
Section \ref{section:overview}  applies dual-decomposition, which turns out to be the basis of our distributed algorithm.  
Section \ref{section:algo} proposes distributed algorithms that optimize DSR, as well as the index policy that can ensure hard capacity constraints are met.  
Section \ref{section:simulation} presents our simulation results.  Finally, Section \ref{section:conclusion} concludes the paper.

\section{Related Work}
\label{section:related}

Broadcasting/multicasting is a fundamental functionality of networks, and has been studied in a substantial body of literature.  One of the earliest policies for broadcasting/multicasting in ad hoc networks is via flooding  \cite{Ho99, Johnson01}.  However, such policies can lead to severe packet collision frequency, and excessive redundant retransmissions, as shown by Ni et al. \cite{Ni99}.   Gandhi et al. \cite{Gandhi03} and Huang et al. \cite{Huang07} have shown that the problem of minimizing delay in wireless ad hoc networks is NP-hard, and have proposed approximation algorithms aiming to reduce delay. These studies rely on centralized algorithms. 

%Hou et al. \cite{hou2008adapcode} and Yeow et al. \cite{yeow2009minimizing} have proposed using network coding for wireless broadcast, but their solutions lack provable per-packet guarantees. 
%Further, these studies focus on omnidirectional wireless transmissions, and cannot be directly applied to the highly directional mmWave band.
%the proposed several approximation policies to address the  problem with the assistance of pre-constructed broadcasting trees.  {\color{red}Nevertheless, these works could not apply to the more general multi-hop networks.}

There has been much interest in throughput optimal broadcasting/multicasting.   For instance, Sarkar and Tassiulas \cite{Tass02} proposed a scheduling and routing policy that relies on pre-computed spanning trees, which might be difficult to maintain and compute in scalable sized networks.  Ho and Viswanathan \cite{Ho05} and Yuan et al. \cite{Yuan06} propose network coding based policies in the context, which, however, leads to additional computation complexity.   Zhang et al. \cite{Zhang13} and Sinha et al. \cite{Sinha16} consider multi-hop broadcasting problems in Directed Acyclic Graphs (DAG), which are not applicable to networks with arbitrary topology.  Sinha et al. \cite{Sinha16-2} also propose a centralized throughput optimal broadcasting policy for networks with arbitrary topology, which might be difficult to deploy in a large scale system.  Furthermore, the throughput maximization focus of all the above does not directly allow for meeting stringent deadline guarantees.

Given the rising application of wireless networks to safety-critical and realtime applications, there has been much recent interest in deadline constrained multi-hop communication.  Xiong et al. \cite{Xiong11} proposed a delay-aware throughput optimal policy for multi-hop networks.  Their policy, however, can not provide stringent delay guarantees.  Mao et al. \cite{Mao13} propose a hard deadline guaranteed policy, under the assumption that all routes in the network are fixed.  Li and Eryilmaz \cite{Li12} consider serving flows with stringent deadlines in a multi-hop system, and their proposed framework can be extended to incorporate routing decisions.  However, their policies are heuristic, and optimality cannot be shown.   Singh and Kumar \cite{Kumar15} relax the deadline constrained optimization problem in the manner of the  Whittle's relaxation for multi-armed bandits, and proposed decentralized optimal solutions.   However, both it and the above body of work on deadline constrained communication only considers unicast traffic, and it is not clear how it applies to broadcasting/multicasting networks.

\section{System Model}
\label{section:model}

We consider a multi-hop network that consists of $N$ wireless nodes operating in the mmWave band motivated by the IAB system.   Here, the nodes correspond to fixed IAB base stations, and the network topology is known to all nodes.   The available spectrum is divided into multiple half-duplex channels, and nodes can use these channels to send and receive packets from multiple nodes simultaneously.  Furthermore, these channels are \emph{directional} in that transmissions on different links do not interfere with each other, consistent with empirical observations in IAB test deployments~\cite{iab2}.  These links can have different constraints on the supportable number of transmissions in each time slot, as well as their reliabilities.    

Time is slotted and numbered as $t=1,2,\dots.$   We assume that link $l$ can transmit $T_l$ packets in each time slot, and that each transmission will be successfully received by the receiver with probability $P_l$.   At the end of each time slot, the receiver sends an aggregated ACK indicating which packets it has successfully received in the time slot to the transmitter.   Where we need to indicate the transmitter and the receiver of a link, we use $l=n\rightarrow m$ to indicate that link $l$ has transmitter $n$ and receiver $m$.

We consider $F$ real-time broadcast flows, using $s_f$ to indicate the source node of flow $f$.  At the beginning of each time slot $t,$ $a_f(t)$ packets of flow $f$ arrive at node $s_f$.  We assume that $[a_f(1), a_f(2),\dots]$ is a sequence of i.i.d. random variables with mean $A_f$.  Moreover, each flow $f$ specifies a per-packet end-to-end deadline of $D_f$ time slots.  Packets from flow $f$ are only useful for $D_f$ time slots from their respective arrival times at their source nodes, and are dropped from the network when they expire.  Due to communication constraints, it is likely that some nodes cannot receive all packets from each flow.  We therefore measure the performance of node $n$ on flow $f$ by its \emph{timely-throughput}, defined as the long-term average number of packets from flow $f$ that are successfully delivered to node $n$ within the deadline. % end-to-end deadline. 

%  Let $x_{n,f}(t)$ be the number of packets from $f$ that are delivered to $n$ at time $t$.{\color{blue} Hence, $x_{n,f}(t)$ are is the number of packets which survived the deadline constraint.} The timely-throughput of node $n$ on flow $f$, denoted by $q_{n,f}$, can be formally written as
% \begin{equation}
%     q_{n,f}:=\liminf_{T \to \infty}\frac{\sum_{t=1}^T x_{n,f}(t)}{T}.
% \end{equation} 

Let $\Omega$ be a set of stationary packet scheduling policies.  Hence, given the state of the system consisting of the locations and expiry times of all existing packets, a policy $\omega\in \Omega$ is a rule that decides which packet to transmit on what link, subject to communication constraints.  For each stationary policy $\omega\in \Omega$, let $x^\omega_{n,f}(t)$ be the number of packets from $f$ that are delivered to $n$ at time $t$ under $\omega,$ i.e., these are the packets that survived the deadline constraint.  Also, let $\epsilon^\omega_{l,f}(t)$ be the number of packets from flow $f$ transmitted over link $l$ at time $t$ under $\omega$.  Since $\omega$ is a stationary policy, and all packets that expire are immediately dropped, we can define
$$\mu^\omega_{n,f}:=\liminf_{T \to \infty}\frac{\sum_{t=1}^T x^\omega_{n,f}(t)}{T}$$
as the timely-throughput of node $n$ on flow $f$ under $\omega$, and 
$$\bar{\epsilon}^\omega_{l,f}:=\limsup_{T \to \infty}\frac{\sum_{t=1}^T \epsilon^\omega_{l,f}(t)}{T}$$
 as the average number of transmissions for flow $f$ over link $l$ under $\omega$. 

Now, finding the optimal total utility with respect to timely-throughputs over all the $N$ nodes under DSR is equivalent to finding the stationary policy that maximizes the total timely-utility under link utilization constraints, which can be written as
\begin{align}
    \mbox{\textbf{Relaxed\ }} &\mbox{\textbf{Timely-Utility Maximization (R-TUM)}}\nonumber\\ 
    \mbox{Max }&\sum_{n=1}^N\sum_{f=1}^FU_{n,f}(\mu^{\omega}_{n,f})\label{eq:relaxed1}\\
    \mbox{s.t. }&\omega \in \Omega,\label{eq:relaxed2}\\
     & \sum_{f=1}^F\bar{\epsilon}_{l,f}^{\omega} \leq T_l, \forall l. \label{eq:relaxed3}
\end{align}

Notice that whereas the R-TUM problem above requires each delivered packet to satisfy its deadline constraint, it only requires that the \emph{long-term average} number of transmissions over link $l$, $\sum_{f=1}^F\bar{\epsilon}_{l,f}^{\omega}$ be no larger than $T_l.$  This link utilization constraint relaxation is in the same manner as \cite{Kumar15}.  In a practical system, such a relaxation might be akin to imposing an average transmit power constraint rather than a hard one.    We will first design policies that pertain to this relaxed link-utilization constraint.  Using the insights gained, we will also develop a policy that enforces a \emph{hard link-utilization constraint,} i.e., 
\begin{equation}
\sum_{f=1}^F\epsilon^\omega_{l,f}(t)\leq T_l, \forall l, t. \label{eq:util3}
\end{equation}

Solving the R-TUM problem could be posed as a Markov Decision Process (MDP), where the state of the system at any given point of time consists of the locations and expiry times of all existing packets.  However, such a solution is infeasible to implement in practice.  First, it is straightforward to show that the number of different system states is at least doubly exponential in $N$, and hence standard algorithms for finding the optimal MDP-based solution will result in prohibitive complexity.   Second, even after one finds the optimal MDP-based solution, it may  be impossible to implement it in a distributed fashion, since the complete state needs to be known at each node.  In what follows, we impose additional structure on the policy space to  render it tractable.

\section{A Structured Approach to Real-Time Broadcasting}
\label{section:structure}

We now introduce two elements of structure to the policy space to enable its solution as a distributed convex optimization problem.

\subsection{Delegated-Set Routing (DSR)}

Ensuring a per-packet deadline guarantee requires that we retain flexibility in routing to dynamically choose the next hop node for a packet based on current state.  Thus, source routing on a per-packet basis is not satisfactory.  However, for distributed implementation, we also need to ensure that there is no ambiguity as to which neighboring node is responsible for transmitting a packet to a given node.  We resolve these seemingly opposite requirements via a protocol that we term \emph{delegated-set routing (DSR)}.   

For each node $n$ that possesses a packet $i$ at time $t,$ we define the delegated-set of node $n$ as the subset of nodes that $n$ is responsible for forwarding packets, possibly through multi-hop transmissions.  First, to ensure routing flexibility,  whenever a node $n$ decides to forward a packet to another node $m$, node $n$ delegates a subset of its own delegated-set to $m,$ and specifies this subset in the packet header.  If the transmission is successful, this subset is removed from the delegated-set of $n,$ since it is now the responsibility of $m$ to forward the packet to this subset.  Second, in order to avoid duplicate transmissions (ambiguity on which node should transmit a given packet), the DSR protocol requires that the delegated-sets of different nodes for the same packet are chosen to be disjoint. 

\begin{figure}[htb]
\vspace{-0.1in}
\center
\includegraphics[width=1.6in]{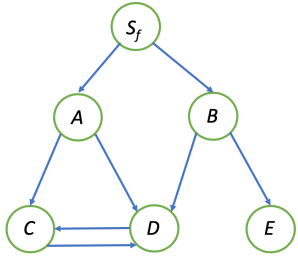}
\caption{An example illustrating DSR.}\label{fig:ex_dynamic_routing}
\vspace{-0.1in}
\end{figure}
To illustrate how DSR works, consider the network as shown in Fig. \ref{fig:ex_dynamic_routing}.  When a packet arrives at the source node $s_f,$ the delegated set of $s_f$ is every node in the network, since it is the responsibility of $s_f$ to broadcast the packet to the entire network.  Suppose in the first time slot, $s_f$ transmits the packet to $A,$ and delegates the subset $\{A,C,D\}$ to $A$.  If the transmission is successful, the delegated-set of $s_f$ becomes $\{s_f, B, E\}$, while the delegated-set of $A$ is $\{A,C,D\}$.  In the next time slot, when $s_f$ transmits the packet to $B$, it needs to delegate the subset $\{B, E\}$ to $B$.  In particular, $s_f$ cannot include node $D$ in the delegated-set for $B$, since $D$ is already in the delegated-set of $A$.

% There are two attractive features of DSR.  First, it allows distributed implementation.  In the above example, when node $A$ receives the packet and the delegated-set of $\{A, C, D\}$, it knows for sure that node $D$ will never obtain the packet from $B$.  Hence, there is no need for explicit coordination between $A$ and $B$. 

% Second, DSR allows intermediate nodes to dynamically adjust routing decisions based on random events, such as transmission failure, observed in the network.  A distributed algorithm for finding optimal decisions will be proposed in the next two sections.

We note that the ability to dynamically adjust routing decisions is an important feature that distinguishes DSR from many existing studies on multi-hop broadcasting, such as \cite{Sinha16-2} and \cite{Tass02}.  These studies adopt source-routing, where the source node determines the routing decision of each packet, and intermediate nodes cannot change the decision.  As $s_f$ cannot foresee whether the transmissions from $A$ to $C$ will be successful, it cannot take an  optimal routing decision.

\subsection{Epoch-wise Stationary Policies}
Our second aspect of adding structure to the policy space is to expand it from $\Omega$, the set of all stationary policies, to the set of all \emph{epoch-wise stationary policies}.  In an epoch-wise stationary policy, time is divided into epochs of equal length. The epoch-wise stationary policy adopts a stationary policy $\omega^+[i]$ in each epoch $i$. The duration of an epoch is chosen to be large enough so that the average performance of $\omega^+[i]$ in epoch $i$ is not influenced by the system state at the beginning of the epoch. Specifically, an epoch-wise stationary policy is defined as follows:

\begin{definition}
An epoch-wise stationary policy is a sequence of stationary policies $\omega^+= (\omega[i])_{i=1}^{\infty}, \omega[i] \in \Omega\}$, where $\omega[i]$ is used in epoch $i$. The length of an epoch is chosen so that, under 
$\omega^+$, $$\mu^{\omega^+}_{n,f}:=\liminf_{T \to \infty}\frac{\sum_{t=1}^T x^{\omega^+}_{n,f}(t)}{T}=\liminf_{I\to \infty}\frac{\sum_{i}^I \mu^{\omega[i]}_{n,f}}{I},$$ and $$\bar{\epsilon}^{\omega^+}_{l,f}:=\limsup_{T \to \infty}\frac{\sum_{t=1}^T \epsilon^{\omega^+}_{l,f}(t)}{T}=\limsup_{I\to \infty}\frac{\sum_{i}^I \bar{\epsilon}^{\omega[i]}_{l,f}}{I}.$$
\end{definition}

We can now define $\Omega^+$ as the set of all epoch-wise stationary policies. For each epoch-wise stationary policy $\omega^+$, let $\gamma^{\omega^+}:=[[\mu^{\omega^+}_{n,f}, 1\leq n\leq N, 1\leq f\leq F], [\bar{\epsilon}^{\omega^+}_{l,f}, 1\leq l\leq L, 1\leq f\leq F]]$ be the vector of timely-throughputs and average link uses under $\omega^+$. Also, let $\Gamma:=\{\gamma^{\omega^+}|\omega^+\in \Omega^+\}$ be the set of attainable vectors of timely-throughputs and average link uses under all epoch-wise stationary policies. An important advantage of considering the policy space $\Omega^+$ is that $\Gamma$ is a convex set.

\begin{lemma}
$\Gamma$ is convex.
\label{lem:gamma}
\end{lemma}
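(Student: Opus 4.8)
The plan is to prove convexity by a \emph{time-sharing} (epoch-interleaving) argument, which is exactly what the enlargement from $\Omega$ to $\Omega^+$ was designed to enable. Fix two attainable vectors $\gamma^{(1)}=\gamma^{\omega_1^+}$ and $\gamma^{(2)}=\gamma^{\omega_2^+}$, where $\omega_1^+=(\omega_1[i])_{i=1}^\infty$ and $\omega_2^+=(\omega_2[i])_{i=1}^\infty$, and fix $\lambda\in[0,1]$. It suffices to exhibit a single epoch-wise stationary policy $\omega_3^+\in\Omega^+$ whose vector equals $\lambda\gamma^{(1)}+(1-\lambda)\gamma^{(2)}$; this places the convex combination in $\Gamma$ and proves the lemma.

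For the construction, I would build $\omega_3^+$ by interleaving the two given sequences of per-epoch stationary policies in asymptotic proportion $\lambda:(1-\lambda)$, consuming the policies of each sequence in order. First I take the epoch length of $\omega_3^+$ large enough for both $\omega_1^+$ and $\omega_2^+$ (e.g.\ the larger of their two epoch lengths), so that the per-epoch decoupling in the Definition applies to $\omega_3^+$ as well. For rational $\lambda=p/q$, I group epochs into blocks of $q$ and, in each block, assign $p$ epochs to run the next-unused policies of $\omega_1^+$ and the remaining $q-p$ epochs to run those of $\omega_2^+$. For general real $\lambda$, I replace block scheduling by proportional rounding, assigning epoch $i$ to $\omega_1^+$ precisely when $\lfloor\lambda i\rfloor>\lfloor\lambda(i-1)\rfloor$, so that the fraction of the first $I$ epochs drawn from $\omega_1^+$ tends to $\lambda$. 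In either case $\omega_3^+=(\omega_3[i])_{i=1}^\infty$ is again a sequence of stationary policies, hence a legitimate member of $\Omega^+$.

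For the computation, the Definition gives $\mu^{\omega_3^+}_{n,f}=\liminf_{I}\tfrac1I\sum_{i=1}^I\mu^{\omega_3[i]}_{n,f}$, and after $I$ epochs this sum decomposes into the contributions of the (roughly $\lambda I$) epochs drawn from $\omega_1^+$ and the (roughly $(1-\lambda)I$) drawn from $\omega_2^+$. Factoring out the proportions rewrites the average as $\lambda$ times a partial Ces\`aro average of $\{\mu^{\omega_1[j]}_{n,f}\}$ plus $(1-\lambda)$ times one of $\{\mu^{\omega_2[j]}_{n,f}\}$, each of which approaches the corresponding coordinate of $\gamma^{(1)}$ and $\gamma^{(2)}$; the discrepancy from incomplete blocks and from the rounding is $O(1/I)$ and vanishes. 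The identical bookkeeping applied to the link-use coordinates (with $\limsup$ in place of $\liminf$, and $\bar{\epsilon}$ in place of $\mu$) supplies the remaining coordinates, so $\gamma^{\omega_3^+}=\lambda\gamma^{(1)}+(1-\lambda)\gamma^{(2)}$.

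The delicate point, and the step I expect to be the main obstacle, is the interplay between the $\liminf$ defining the throughputs and the $\limsup$ defining the link uses: to conclude that the interleaved average equals the convex combination in \emph{every} coordinate simultaneously, I must ensure the per-epoch Ces\`aro averages of each constituent policy genuinely converge (so that $\liminf$ and $\limsup$ coincide) along the interleaved schedule, and that taking the overall $\liminf$ and $\limsup$ over all $I$ --- not merely at block boundaries --- introduces no gap. This is precisely where the ``epoch long enough that its average is unaffected by the initial state'' assumption in the Definition does the real work: it guarantees each epoch contributes its stationary value $\mu^{\omega[i]}_{n,f}$ (respectively $\bar{\epsilon}^{\omega[i]}_{l,f}$) up to a negligible transient, so the running averages are well controlled and the two limits align. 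Once this convergence is secured, the remaining estimates are routine.
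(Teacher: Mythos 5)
Your proof is correct and follows essentially the same route as the paper's: both interleave the two epoch sequences using the rounding rule that epoch $i$ is assigned to $\omega_1^+$ exactly when $\lfloor a i\rfloor>\lfloor a(i-1)\rfloor$, so that the first $I$ epochs contain the first $\lfloor aI\rfloor$ policies of $\omega_1^+$ and the first $\lceil(1-a)I\rceil$ of $\omega_2^+$, giving the convex combination in the limit. Your added discussion of the $\liminf$/$\limsup$ alignment is a refinement the paper leaves implicit, but it does not change the argument.
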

\begin{proof}
The proof is is provided in section \ref{sec:apen}.
\end{proof}

Since $\Gamma$ is a convex set, it is straightforward to verify that the optimization problem (\ref{eq:relaxed1})--(\ref{eq:relaxed3}) subject the policy space $\Gamma$ is a convex optimization problem.

\section{Solution Overview}
\label{section:overview}

Although the problem \textbf{R-TUM}, (\ref{eq:relaxed1}) -- (\ref{eq:relaxed3}), is convex, solving it directly remains challenging because there is no simple characterization of $\Gamma$. In this section, we present a general framework of solving \textbf{R-TUM} through dual decomposition. The exact distributed algorithm will be presented in the next section.

\subsection{Dual Problem Formulation}

Let $\lambda_l$ be the Lagrange multiplier with respect to the constraint $\sum_{f=1}^F\bar{\epsilon}_{l,f}^{\omega^+} \leq T_l$ in (\ref{eq:relaxed3}), and $\lambda$ be the vector of all $\lambda_l$, $l=1,2,\dots, L$.  The Lagrangian of \textbf{R-TUM} is then 
\begin{align}
    \mathcal{L}(\gamma^{\omega^+}, \lambda) = &\sum_{n=1}^N\sum_{f=1}^FU_{n,f}(\mu^{\omega^+}_{n,f})-\sum_{l=1}^L\lambda_l\left(\sum_{f=1}^F\bar{\epsilon}_{l,f}^{\omega^+}- T_l\right), \label{eq:lag}
\end{align}
and the dual objective function is 
\begin{equation}
    \mathcal{D}(\lambda)=\max_{\gamma \in \Gamma}\mathcal{L}(\gamma, \lambda). \label{eq:dual}
\end{equation}

The dual problem of \textbf{R-TUM} is to find a non-negative vector $\lambda$ that minimizes $\mathcal{D}(\lambda)$.

We first show that strong duality holds for \textbf{R-TUM}.

\begin{theorem}
Let $\mathcal{P}^*$ be the optimal solution to \textbf{R-TUM}, and $\mathcal{D}^*:=\min_{\lambda: \lambda_l\geq 0, \forall l}\mathcal{D}(\lambda)$, then $\mathcal{P}^*=\mathcal{D}^*$.
\label{th:rel_utility}
\end{theorem}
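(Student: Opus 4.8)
The plan is to show that \textbf{R-TUM} fits the standard strong-duality theorem for convex programs, so that Slater's condition rules out any duality gap. First I would record the structural facts that make \textbf{R-TUM} convex: by Lemma~\ref{lem:gamma} the feasible set $\Gamma$ is convex, each constraint $\sum_{f=1}^F \bar{\epsilon}_{l,f}^{\omega^+}\le T_l$ is affine in the coordinates of $\gamma^{\omega^+}$, and the objective $\sum_{n=1}^N\sum_{f=1}^F U_{n,f}(\mu^{\omega^+}_{n,f})$ is concave since each $U_{n,f}$ is concave. Thus we are maximizing a concave function over a convex set subject to affine inequality constraints.

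Next I would establish weak duality, which needs no extra assumptions. For any feasible $\gamma\in\Gamma$ and any $\lambda$ with $\lambda_l\ge 0$, the penalty term $-\sum_l \lambda_l(\sum_f \bar{\epsilon}_{l,f} - T_l)$ is nonnegative on the feasible set, so the primal objective at $\gamma$ is bounded above by $\mathcal{L}(\gamma,\lambda)$, hence by $\mathcal{D}(\lambda)=\max_{\gamma\in\Gamma}\mathcal{L}(\gamma,\lambda)$. Taking the supremum over feasible $\gamma$ and then the infimum over $\lambda\ge 0$ gives $\mathcal{P}^*\le \mathcal{D}^*$, so it remains only to exclude a strict gap.

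The key step is to verify Slater's condition by exhibiting a strictly feasible point. For this I would use the \emph{idle} epoch-wise stationary policy that never transmits any packet: it attains $\bar{\epsilon}_{l,f}^{\omega^+}=0$ for every link $l$ and flow $f$, so $\sum_{f=1}^F \bar{\epsilon}_{l,f}^{\omega^+}=0 < T_l$ for every $l$ (using $T_l>0$). Hence the corresponding vector lies in $\Gamma$ and strictly satisfies every inequality constraint. Since the problem is convex and the constraints are affine, Slater's condition holds, and the standard strong-duality theorem yields $\mathcal{P}^*=\mathcal{D}^*$ together with attainment of the dual minimum.

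I expect the main obstacle to be not the duality argument itself but cleanly justifying that the convex-programming machinery applies to $\Gamma$, whose only description is the abstract achievable region of Lemma~\ref{lem:gamma}. In particular one must state that concavity is assumed for the $U_{n,f}$ and confirm that the idle policy genuinely realizes a point of $\Gamma$ (immediate, since doing nothing is a valid stationary policy repeated in every epoch). If one prefers a self-contained proof over citing Slater's theorem, the work shifts to forming the perturbed value function $v(z)=\max\{\sum_{n,f}U_{n,f}(\mu_{n,f}) : \gamma\in\Gamma,\ \sum_f\bar{\epsilon}_{l,f}\le T_l+z_l\}$, showing $v$ is concave using convexity of $\Gamma$, and separating $(0,\mathcal{P}^*)$ from the hypograph of $v$ to recover the optimal multipliers; the strictly feasible idle policy is precisely what guarantees a non-vertical supporting hyperplane there.
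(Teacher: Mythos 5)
Your proposal is correct and follows essentially the same route as the paper: both establish that \textbf{R-TUM} is a convex program (via Lemma~\ref{lem:gamma}, the affine link constraints, and concavity of the utilities) and then verify Slater's condition using the policy that never transmits, which gives zero link usage strictly below each $T_l$. Your write-up is somewhat more careful than the paper's in that it spells out weak duality and flags the implicit concavity assumption on the $U_{n,f}$, but the underlying argument is identical.
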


\begin{proof}
The proof is provided in section \ref{sec:apen}.
\end{proof}

Hence, solving \textbf{R-TUM} is equivalent to solving the dual problem, which consists of two steps: First, given a vector $\lambda$, we need to find the dual objective function $\mathcal{D}(\lambda)$. Second, we need to find the vector $\lambda$ that minimizes $\mathcal{D}(\lambda)$.

\subsection{Packet-By-Packet Decomposition for the Dual Objective}
We first present an iterative algorithm that finds $\mathcal{D}(\lambda)=\max_{\gamma \in \Gamma}\mathcal{L}(\gamma, \lambda)$ for a given $\lambda$ using the steepest ascent algorithm.   For each stationary policy $\omega$, let $\gamma^\omega$ be  defined to be the vector of timely-throughputs and link usages under $\omega.$   Then the steepest ascent algorithm constructs a sequence of epoch-wise stationary policies that ultimately converges to the optimal epoch-wise stationary policy.  The algorithm proceeds as follows:
%Starting with an arbitrary stationary policy $\omega_1$. Our steepest ascent algorithm constructs a sequence of stationary policies $\omega_2^+, \omega_3^+, \dots$ and a sequence of epoch-wise stationary policies $\omega_1^+, \omega_3^+, \dots$ as follows:
\begin{enumerate}
\item Set $k\leftarrow 1$
\item Let $\omega_k^+$ be the round-robin epoch-wise stationary policy that  follows the sequence $\{\omega_1, \omega_2,\dots,\omega_k, \omega_1, \omega_2,\dots,\omega_k,\dots\}$. 
\item Let $\omega_{k+1}$ be the stationary policy that maximizes the directional derivative, $\nabla \mathcal{L}(\gamma^{\omega_k^+},\lambda)\cdot\gamma^{\omega_{k+1}}.$%\footnote{For each stationary policy $\omega$, $\gamma^\omega$ is defined to be the vector of timely-throughputs and link usages under $\omega$.}.
\item Set $k\leftarrow k+1$ and repeat step 2.
\end{enumerate}

Based on our construction of $\omega_k^+$, we have $\gamma^{\omega_k^+}=\frac{\sum_{j=1}^k\gamma^{\omega_j}}{k}$. Therefore $\gamma^{\omega_{k+1}^+}-\gamma^{\omega_{k}^+}=\frac{\gamma^{\omega_{k+1}}-\gamma^{\omega_k^+}}{k+1}$. Effectively, for each $k$, our steepest ascent algorithm finds $\omega_{k+1}^+$ that maximizes the directional derivative $\nabla \mathcal{L}(\gamma^{\omega_k^+},\lambda)\cdot(\gamma^{\omega_{k+1}^+}-\gamma^{\omega_{k}^+})$ among all epoch-wise stationary policies with step size $\frac{1}{k+1}$.  Following the analysis presented in Boyd~et~al. \cite{boyd2004convex} Section 9.4.3, it is straightforward to show the following:

\begin{theorem}
Under our steepest ascent algorithm, $\mathcal{L}(\gamma^{\omega_k^+}, \lambda)$ converges to $\mathcal{D}(\lambda)$, as $k\rightarrow \infty$.
\end{theorem}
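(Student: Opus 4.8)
The plan is to recognize the iteration in steps 2--4 as a conditional-gradient (Frank--Wolfe-type) steepest-ascent scheme and to invoke the standard convergence argument for such schemes over a compact convex set. First I would record the three structural facts that make the framework applicable. (i) For fixed $\lambda$, the map $\gamma \mapsto \mathcal{L}(\gamma,\lambda)$ in (\ref{eq:lag}) is concave: the utility term $\sum_{n,f}U_{n,f}(\mu_{n,f})$ is a sum of concave functions of individual coordinates, and the multiplier term is affine in $\gamma$. (ii) By Lemma \ref{lem:gamma}, $\Gamma$ is convex, and it is bounded since each $\mu_{n,f}$ is dominated by the arrival rate $A_f$ and each $\bar\epsilon_{l,f}$ by $T_l$; hence $\mathcal{D}(\lambda)=\max_{\gamma\in\Gamma}\mathcal{L}(\gamma,\lambda)$ is attained at some $\gamma^{*}$. (iii) Because $\Gamma$ is the (closed) convex hull of the stationary-policy value vectors $\{\gamma^{\omega}:\omega\in\Omega\}$ — which is precisely the time-sharing structure behind Lemma \ref{lem:gamma} — maximizing the linear functional $\nabla\mathcal{L}(\gamma^{\omega_k^+},\lambda)\cdot\gamma$ over $\Gamma$ is attained at one of these vectors. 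Thus step 3, selecting the stationary policy $\omega_{k+1}$ maximizing $\nabla\mathcal{L}(\gamma^{\omega_k^+},\lambda)\cdot\gamma^{\omega_{k+1}}$, is exactly the linear-maximization oracle of the conditional-gradient method, and the update $\gamma^{\omega_{k+1}^+}=(1-\eta_k)\gamma^{\omega_k^+}+\eta_k\gamma^{\omega_{k+1}}$ with $\eta_k=\frac{1}{k+1}$ is the corresponding diminishing-step iterate.

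Next I would carry out the two inequalities at the heart of the convergence proof. Writing $g_k:=\nabla\mathcal{L}(\gamma^{\omega_k^+},\lambda)\cdot(\gamma^{\omega_{k+1}}-\gamma^{\omega_k^+})$ for the directional derivative, the ascent lemma for a function with finite curvature constant $C$ on $\Gamma$ gives
$$\mathcal{L}(\gamma^{\omega_{k+1}^+},\lambda)\;\ge\;\mathcal{L}(\gamma^{\omega_k^+},\lambda)+\eta_k g_k-\tfrac{1}{2}C\eta_k^{2}.$$
On the other hand, because $\gamma^{\omega_{k+1}}$ maximizes the linear functional over $\Gamma$ and $\gamma^{*}\in\Gamma$, and because concavity of $\mathcal{L}$ implies the first-order tangent overestimate, I obtain
$$g_k\;\ge\;\nabla\mathcal{L}(\gamma^{\omega_k^+},\lambda)\cdot(\gamma^{*}-\gamma^{\omega_k^+})\;\ge\;\mathcal{L}(\gamma^{*},\lambda)-\mathcal{L}(\gamma^{\omega_k^+},\lambda)\;=\;h_k,$$
where $h_k:=\mathcal{D}(\lambda)-\mathcal{L}(\gamma^{\omega_k^+},\lambda)\ge 0$ is the optimality gap. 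Combining the two displays yields the scalar recursion $h_{k+1}\le(1-\eta_k)h_k+\tfrac{1}{2}C\eta_k^{2}$. Since $\eta_k=\frac{1}{k+1}$ satisfies $\sum_k\eta_k=\infty$ while $\tfrac{1}{2}C\eta_k^{2}/\eta_k=\tfrac{1}{2}C\eta_k\to 0$, a standard lemma on such contracting recursions forces $h_k\to 0$, i.e. $\mathcal{L}(\gamma^{\omega_k^+},\lambda)\to\mathcal{D}(\lambda)$; a sharper induction even yields an algebraic $1/k$-type decay of $h_k$, matching the rate in \cite{boyd2004convex}.

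The main obstacle I anticipate is not the scalar recursion but justifying that the abstract conditional-gradient iterate is realizable inside the policy class: I must confirm that the convex combination $\gamma^{\omega_{k+1}^+}=\frac{k}{k+1}\gamma^{\omega_k^+}+\frac{1}{k+1}\gamma^{\omega_{k+1}}$ is the value vector of an \emph{actual} epoch-wise stationary policy rather than an abstract point of $\Gamma$. This is supplied by the round-robin construction together with the identity $\gamma^{\omega_k^+}=\frac{1}{k}\sum_{j=1}^{k}\gamma^{\omega_j}$ already noted before the theorem, which is itself underwritten by the averaging argument behind Lemma \ref{lem:gamma}; the epoch length must be long enough that the per-epoch empirical averages actually equal the stationary values, so that the telescoping update is exact. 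A secondary technical point is guaranteeing that the curvature constant $C$ is finite, which needs $\nabla\mathcal{L}(\cdot,\lambda)$ to be Lipschitz on the bounded set $\Gamma$; this holds whenever the utilities $U_{n,f}$ are continuously differentiable with bounded second derivative on the relevant range, an assumption I would state explicitly, or else replace by the weaker requirement that $\mathcal{L}(\cdot,\lambda)$ merely possess a finite curvature constant on $\Gamma$, which is all the conditional-gradient convergence bound actually consumes.
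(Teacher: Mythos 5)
Your proposal is correct and follows essentially the same route the paper intends: it recognizes steps 2--4 as a conditional-gradient (Frank--Wolfe) ascent with step size $\frac{1}{k+1}$ over the convex, bounded set $\Gamma$, and invokes the standard curvature-constant recursion for the optimality gap — exactly the analysis the paper delegates to \cite{boyd2004convex} without writing out. Your added care about the linear oracle being attained at a stationary-policy vector, the realizability of the convex combination as an actual round-robin epoch-wise policy, and the (unstated) smoothness assumption on $U_{n,f}$ needed for a finite curvature constant fills in details the paper omits rather than departing from its approach.
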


Notice that the critical step in our steepest ascent policy is to find $\omega_{k+1}$ that maximizes $\nabla L(\gamma^{\omega_k^+},\lambda)\cdot\gamma^{\omega_{k+1}}$.  We have
\begin{align*}
&\nabla \mathcal{L}(\gamma^{\omega_k^+},\lambda)\cdot\gamma^{\omega_{k+1}}\\
=&\sum_{n,f}\frac{\partial}{\partial \mu_{n,f}}\mathcal{L}(\gamma^{\omega_k^+},\lambda)\mu_{n,f}^{\omega_{k+1}}
+\sum_{l,f}\frac{\partial}{\partial \bar{\epsilon}_{l,f}}\mathcal{L}(\gamma^{\omega_k^+},\lambda)\bar{\epsilon}_{l,f}^{\omega_{k+1}}\\
=&\sum_{f=1}^F\left\{\sum_{n=1}^NU'_{n,f}(\mu_{n,f}^{\omega_k^+})\mu_{n,f}^{\omega_{k+1}}-\sum_{l=1}^L\lambda_l\bar{\epsilon}_{l,f}^{\omega_{k+1}}\right\}.
\end{align*}

This naturally gives us a flow-by-flow decomposition in the sense that $\nabla \mathcal{L}(\gamma^{\omega_k^+},\lambda)\cdot\gamma^{\omega_{k+1}}$ can be maximized by maximizing
\begin{equation}
\sum_{n=1}^NU'_{n,f}(\mu_{n,f}^{\omega_k^+})\mu_{n,f}^{\omega_{k+1}}-\sum_{l=1}^L\lambda_l\bar{\epsilon}_{l,f}^{\omega_{k+1}}\label{eq:flow-by-flow}
\end{equation}  
for each flow $f$ individually. Moreover, note that, after normalizing with the average packet arrival rate of flow $f$, $\mu_{n,f}^{\omega_{k+1}}$ is the average delivery per-packet from flow $f$ to node $n$, and $\bar{\epsilon}_{l,f}^{\omega_{k+1}}$ is the average number of transmissions per packet over link $l$ for flow $f$. 

For each packet $i$ from flow $f,$ let $y_{n,f,i}$ be a random variable representing the event that packet $i$ is successfully delivered to node $n$  within its deadline of $d_f$.   Also, let $z_{l,f,i}$ be the random variable indicating the number times that link $l$ transmits $i.$  Then $\mathop{\mathbb{E}}[y_{n,f,i}]$ is the success probability that packet $i$ is delivered to node $n,$ while $\mathop{\mathbb{E}}[z_{l,f,i}]$ is the expected number of times that link $l$ transmits $i$.   Therefore, from (\ref{eq:flow-by-flow}), maximizing $\nabla \mathcal{L}(\gamma^{\omega_k^+},\lambda)\cdot\gamma^{\omega_{k+1}}$ can be achieved by maximizing 
\begin{equation}
\sum_{n=1}^NU'_{n,f}(\mu_{n,f}^{\omega_k^+})\mathop{\mathbb{E}}[y_{n,f,i}] - \sum_{l=1}^L\lambda_l\mathop{\mathbb{E}}[z_{l,f,i}]\label{eq:packet-by-packet}
\end{equation} 
for each packet $i$. 

We note that such packet-by-packet decomposition allows distributed algorithms for finding the optimal solution since, instead of considering the system state as a whole, each packet only needs to maximize (\ref{eq:packet-by-packet}) on its own, without considering the states of other packets.

%its own $\sum_{n=1}^NU'_{n,f}(q_{n,f}^{\omega_k^+})\mathop{\mathbb{E}}[y^f_{i,n}] - \sum_{l=1}^L\lambda_l\mathop{\mathbb{E}}[z^f_{i,l}]$ without considering the states of other packets.

\subsection{Link-by-Link Update for the Dual Problem}
After finding $\mathcal{D}(\lambda)$, we now proceed to find the solution to the dual problem, $\min_{\lambda:\lambda_l\geq 0, \forall l} \mathcal{D}(\lambda)$. Our solution is based on the subgradient method. We first find the subgradient of $\mathcal{D}(\lambda)$.

\begin{theorem}
Let $\gamma(\lambda)=[[\mu_{n,f}(\lambda)], [\bar{\epsilon}_{l,f}(\lambda)]]:=\arg\max_{\gamma\in \Gamma}\mathcal{L}(\gamma, \lambda)$, then the $L$-dimensional vector $[T_l-\sum_{f=1}^F\bar{\epsilon}_{l,f}(\lambda)]$ is a subgradient for $\mathcal{D}(\lambda)$.
\label{th:lambda}
\end{theorem}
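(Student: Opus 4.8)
The plan is to recognize $\mathcal{D}(\lambda) = \max_{\gamma \in \Gamma} \mathcal{L}(\gamma, \lambda)$ as the pointwise maximum of a family of functions that are \emph{affine} in $\lambda$, and to read off the subgradient directly from the maximizing element. For each fixed $\gamma \in \Gamma$, inspecting (\ref{eq:lag}) shows that $\mathcal{L}(\gamma, \lambda)$ is affine in $\lambda$, with
\[
\frac{\partial}{\partial \lambda_l}\mathcal{L}(\gamma, \lambda) = T_l - \sum_{f=1}^F \bar{\epsilon}_{l,f}.
\]
Since a pointwise maximum of affine functions is convex, $\mathcal{D}$ is convex, so the subgradient (rather than gradient) notion is the appropriate one for the subsequent dual minimization.

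The core argument is the standard one for subgradients of a support-type function. Fix $\lambda$ and let $\gamma(\lambda)$ be the maximizer named in the statement. For an arbitrary $\lambda'$, I would first use that $\gamma(\lambda)$ is merely one feasible choice at $\lambda'$, hence $\mathcal{D}(\lambda') = \max_{\gamma \in \Gamma}\mathcal{L}(\gamma, \lambda') \ge \mathcal{L}(\gamma(\lambda), \lambda')$. Next, because $\mathcal{L}(\gamma(\lambda), \cdot)$ is affine, its first-order expansion about $\lambda$ is \emph{exact}:
\[
\mathcal{L}(\gamma(\lambda), \lambda') = \mathcal{L}(\gamma(\lambda), \lambda) + \sum_{l=1}^L \Big(T_l - \sum_{f=1}^F \bar{\epsilon}_{l,f}(\lambda)\Big)(\lambda'_l - \lambda_l).
\]
Finally, since $\gamma(\lambda)$ attains the maximum at $\lambda$, we have $\mathcal{L}(\gamma(\lambda),\lambda) = \mathcal{D}(\lambda)$. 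Combining these yields $\mathcal{D}(\lambda') \ge \mathcal{D}(\lambda) + g \cdot (\lambda' - \lambda)$ with $g = [T_l - \sum_{f=1}^F \bar{\epsilon}_{l,f}(\lambda)]$, which is precisely the defining inequality of a subgradient of the convex function $\mathcal{D}$ at $\lambda$.

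The argument is short, and I do not expect a serious technical obstacle; the points requiring care are structural rather than computational. First, I should confirm that the maximizer $\gamma(\lambda)$ actually exists, which follows from $\Gamma$ being convex (Lemma~\ref{lem:gamma}) together with its compactness and the continuity of the $U_{n,f}$; without this the $\arg\max$ in the statement would be ill-defined. Second, I should be careful with the direction of the inequality and the sign convention: because $\mathcal{D}$ is convex and we are minimizing it, the subgradient inequality points upward as written, and crucially the argument requires no uniqueness of the maximizer—any $\gamma(\lambda) \in \arg\max$ produces a valid subgradient, so $\mathcal{D}$ may fail to be differentiable wherever the maximizer is non-unique. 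This nondifferentiability is exactly why a subgradient, and the corresponding link-by-link subgradient update that follows, is the correct tool here.
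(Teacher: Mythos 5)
Your proposal is correct and follows essentially the same route as the paper's proof: lower-bound $\mathcal{D}(\lambda')$ by $\mathcal{L}(\gamma(\lambda),\lambda')$, then use the affineness of $\mathcal{L}(\gamma(\lambda),\cdot)$ in $\lambda$ (which the paper carries out by explicitly adding and subtracting the $\lambda_l$ terms) to arrive at the subgradient inequality. Your additional remarks on existence of the maximizer and non-uniqueness are sensible but do not change the argument.
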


\begin{proof}
The proof is provided in section \ref{sec:apen}.
\end{proof}

The subgradient method finds the optimal $\lambda$ that minimizes $\mathcal{D}(\lambda)$ iteratively. Starting with an arbitrary vector $\lambda(1)$, the subgradient method finds $\lambda(k+1)=[\lambda_l(k+1)]$ by setting
\begin{equation}
\lambda_l(k+1) = {\left[\lambda_l(k) - \beta_k \left(T_l-\sum_{f=1}^F\bar{\epsilon}_{l,f}(\lambda(k))\right)\right]}^+,
\label{eq:update}
\end{equation}
where $x^+:=\max\{0, x\}$.

\begin{theorem}
If the sequence $\beta_k$ is chosen so that $\beta_k\geq 0, \forall k$, $\sum_{k=1}^\infty \beta_k=\infty$, and $\lim_{k\to\infty}\beta_k=0$, then $\mathcal{D}(\lambda(k))\to \min_{\lambda:\lambda_l\geq 0,\forall l}\mathcal{D}(\lambda)$, as $k\to\infty$.
\end{theorem}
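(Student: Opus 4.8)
The plan is to recognize the iteration (\ref{eq:update}) as the \emph{projected subgradient method} applied to the convex function $\mathcal{D}$ over the closed convex feasible set $C=\{\lambda:\lambda_l\geq 0,\ \forall l\}$, and to run the standard Lyapunov-style analysis based on the squared distance to an optimizer. First I would record the structural facts this requires. The dual function $\mathcal{D}(\lambda)=\max_{\gamma\in\Gamma}\mathcal{L}(\gamma,\lambda)$ is convex, since for each fixed $\gamma$ the map $\lambda\mapsto\mathcal{L}(\gamma,\lambda)$ is affine, so $\mathcal{D}$ is a pointwise maximum of affine functions. The Euclidean projection onto $C$ is the componentwise positive part $[\cdot]^+$, so (\ref{eq:update}) reads $\lambda(k+1)=P_C\big(\lambda(k)-\beta_k g(k)\big)$ with $g(k):=[T_l-\sum_{f}\bar{\epsilon}_{l,f}(\lambda(k))]$, which is a subgradient of $\mathcal{D}$ at $\lambda(k)$ by Theorem \ref{th:lambda}, and $P_C$ is nonexpansive. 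Two ingredients are then needed: a dual optimizer $\lambda^*$ with $\mathcal{D}(\lambda^*)=\mathcal{D}^*$, whose existence is guaranteed by the strong duality of Theorem \ref{th:rel_utility}, and a uniform bound $\|g(k)\|\leq G$, which holds because each $\bar{\epsilon}_{l,f}$ is bounded (the transmissions a link makes per slot are limited by the finite population of live packets), so every component $T_l-\sum_f\bar{\epsilon}_{l,f}(\lambda(k))$ stays in a fixed interval.

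Next I would derive the one-step inequality. Using $\lambda^*=P_C(\lambda^*)$, nonexpansiveness, and then expanding the square,
\[
\|\lambda(k+1)-\lambda^*\|^2\leq\|\lambda(k)-\lambda^*\|^2-2\beta_k\,g(k)^\top(\lambda(k)-\lambda^*)+\beta_k^2\|g(k)\|^2 .
\]
The subgradient inequality for the convex $\mathcal{D}$ gives $g(k)^\top(\lambda(k)-\lambda^*)\geq\mathcal{D}(\lambda(k))-\mathcal{D}^*$, and with $\|g(k)\|\leq G$ this becomes $\|\lambda(k+1)-\lambda^*\|^2\leq\|\lambda(k)-\lambda^*\|^2-2\beta_k(\mathcal{D}(\lambda(k))-\mathcal{D}^*)+\beta_k^2 G^2$. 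Summing from $1$ to $K$, dropping the nonnegative left-hand term, and using $\mathcal{D}(\lambda(k))\geq\mathcal{D}^*$, I would divide by $2\sum_{k=1}^K\beta_k$ to bound the running-best value:
\[
\min_{1\leq k\leq K}\mathcal{D}(\lambda(k))-\mathcal{D}^*\leq\frac{\|\lambda(1)-\lambda^*\|^2+G^2\sum_{k=1}^K\beta_k^2}{2\sum_{k=1}^K\beta_k}.
\]

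To show the right-hand side vanishes I would invoke the step-size hypotheses. Because $\sum_k\beta_k=\infty$, the constant numerator term over the denominator tends to $0$. For the remaining term it suffices that $\big(\sum_{k=1}^K\beta_k^2\big)/\big(\sum_{k=1}^K\beta_k\big)\to0$, a Ces\`aro/Toeplitz fact that uses only $\beta_k\to0$: given $\epsilon>0$, split the sum at an index past which $\beta_k<\epsilon$, bound the tail of $\sum\beta_k^2$ by $\epsilon\sum\beta_k$, and let $\sum\beta_k\to\infty$ absorb the fixed head. Hence $\min_{1\leq k\leq K}\mathcal{D}(\lambda(k))\to\mathcal{D}^*$, which, combined with $\mathcal{D}(\lambda(k))\geq\mathcal{D}^*$, gives $\liminf_k\mathcal{D}(\lambda(k))=\mathcal{D}^*$. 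This part is completely robust and mirrors Boyd~et~al. \cite{boyd2004convex}.

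The main obstacle is upgrading this running-best ($\liminf$) statement to convergence of the full sequence $\mathcal{D}(\lambda(k))$, as the theorem asserts, since the hypotheses omit square-summability $\sum_k\beta_k^2<\infty$ (which would otherwise force the iterates $\lambda(k)$ themselves to converge and settle the claim immediately). I would pursue this through a sharpness/error-bound property of the polyhedral dual: $\mathcal{D}$, as a supremum of affine functions, typically admits a weak sharp minimum $\mathcal{D}(\lambda)-\mathcal{D}^*\geq c\,\mathrm{dist}(\lambda,X^*)$ near the optimal set $X^*$. Feeding this into the one-step inequality with $\lambda^*$ taken as the projection of $\lambda(k)$ onto $X^*$ yields $\mathrm{dist}^2(\lambda(k+1),X^*)\leq\mathrm{dist}^2(\lambda(k),X^*)-2c\beta_k\,\mathrm{dist}(\lambda(k),X^*)+\beta_k^2G^2$, from which $\mathrm{dist}(\lambda(k),X^*)\to0$ follows even without square-summability; the Lipschitz continuity of $\mathcal{D}$ (again from $\|g(k)\|\leq G$) then gives $\mathcal{D}(\lambda(k))-\mathcal{D}^*\leq G\,\mathrm{dist}(\lambda(k),X^*)\to0$. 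I expect verifying the error bound — effectively the polyhedrality of $\Gamma$ — to be the only delicate point; absent it, the conclusion should be stated in the running-best sense, which is the customary reading for the diminishing-step subgradient method.
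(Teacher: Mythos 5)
Your derivation is, in substance, the argument behind the very result the paper invokes: the paper's entire proof is the one-line citation ``this is the direct result of Theorem 8.9.2 in \cite{theorem},'' i.e., the standard convergence theorem for the projected subgradient method under the nonsummable diminishing step-size rule. So you have not so much taken a different route as supplied the proof the paper outsources. Your setup is correct: $\mathcal{D}$ is convex as a pointwise maximum of affine functions, $[\cdot]^+$ is the (nonexpansive) Euclidean projection onto the nonnegative orthant, Theorem \ref{th:lambda} supplies the subgradient, and the one-step inequality, the telescoping bound, and the Ces\`aro argument showing $\bigl(\sum_{k\leq K}\beta_k^2\bigr)/\bigl(\sum_{k\leq K}\beta_k\bigr)\to 0$ using only $\beta_k\to 0$ are all sound. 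Two small points: dual attainment of $\lambda^*$ follows from Slater's condition (which is what the proof of Theorem \ref{th:rel_utility} actually verifies), not from the zero duality gap per se; and the uniform bound $\|g(k)\|\leq G$ deserves a sentence tying $\bar{\epsilon}_{l,f}$ to the finite arrival rates and deadlines, though it is clearly true here. Your closing concern is also well taken and is the one place where your write-up is more careful than the paper: under only $\beta_k\geq 0$, $\sum_k\beta_k=\infty$, $\beta_k\to 0$ and bounded subgradients, the clean conclusion is $\liminf_k\mathcal{D}(\lambda(k))=\mathcal{D}^*$ (equivalently, convergence of the running best), and upgrading to convergence of the full sequence generally requires either $\sum_k\beta_k^2<\infty$ or an error-bound/weak-sharp-minimum property that the paper never establishes for $\Gamma$. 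The cited textbook theorem is of the running-best form, so the statement is best read (or restated) in that sense; your proposed polyhedrality route would close the gap only if $\Gamma$ were shown to be a polytope, e.g., the convex hull of finitely many stationary-policy performance vectors, which the paper does not do.
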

\begin{proof}
This is the direct result of Theorem 8.9.2 in \cite{theorem}.
\end{proof}

Recall that $\sum_{f=1}^F\bar{\epsilon}_{l,f}(\lambda(k)))$ is the average number of transmissions that link $l$ makes. Therefore, for link $l$ to update  $\lambda_l$ by (\ref{eq:update}), link $l$ only needs to know its own link constraint and the number of transmissions it makes. Hence, this subgradient method allows for a distributed update of $\lambda_l$.
\iffalse
This decomposition can be interpreted as follows: For each packet from flow $f$, the system obtains a payoff of $U'_{n,f}(q_{n,f})$ if the packet is delivered to node $n$ before its deadline. On the other side, for each packet transmitted over link $l$, the system needs to pay a price of $\lambda_l$. We then aim to maximize the expected total net payoff over all packets, which is expected payoff minus expected price. With this decomposition, each packet only needs maximize its own net payoff, without the need to consider the system state.
\fi

\section{Optimization of DSR}
\label{section:algo}

Under DSR, the transmission strategy for a node having a packet $i$ consists of two parts: determining which node to transmit the packet $i$ to, and determining what delegated-set to assign to the receiver. In this section, we discuss the optimal transmission strategy that maximizes (\ref{eq:packet-by-packet}) under the design of DSR.

Fix a packet $i$ from flow $f$. For each subset of nodes $\pi$, let $\mathcal{L}_\pi$ be the set of links whose transmitter and receiver are both in $\pi$. Also, for each node $n$, subset of nodes $\pi$, and integer $\tau\in[0,d_f]$, define 
\begin{align}
W_f (n, \pi, \tau) &= \nonumber\\
&\max \left(\sum_{k\in \pi}U'_{k,f}(\mu_{k,f}^{\omega_k^+})\mathop{\mathbb{E}}[y_{k,f,i}] - \sum_{l\in \mathcal{L}_\pi}\lambda_l\mathop{\mathbb{E}}[z_{l,f,i}]\right)
\end{align} 
if node $n$ receives the packet $i$ and delegated-set $\pi$, and the packet $i$ has $\tau$ time slots before meeting its deadline.

By the definition of $W_f(n, \pi, \tau)$, finding the optimal transmission strategy that maximizes (\ref{eq:packet-by-packet})  
%$\sum_{n=1}^Nr_{i,n}\mathop{\mathbb{E}}[y_{i,n}] - \sum_{l=1}^L\lambda_l\mathop{\mathbb{E}}[z_{i,l}]$ 
is equivalent to finding the value of $W_f(s_f, \{1,2,\dots,N\}, d_f)$, as well as the transmission strategy that achieves it.

We use dynamic programming to find $W_f(n, \pi, \tau)$. Suppose node $n$ receives the packet $i$ and delegated-set $\pi$, and packet $i$ has $\tau$ time slots before meeting its deadline. Also suppose that node $n$ decides to transmit the packet to $m$ and designates the delegated-set $\pi^m$ to $m$. If the transmission is successful, then, in the next time slot, node $n$ has a delegated-set of $\pi-\pi^m$, node $m$ has a delegated-set of $\pi^m$, and packet $i$ has $\tau-1$ time slots before its deadline. By the definition of $W_f(\cdot)$, we have, given that the transmission is successful,
\begin{align}
    &\max \left(\sum_{k\in \pi}U'_{k,f}(\mu_{k,f}^{\omega_k^+})\mathop{\mathbb{E}}[y_{k,f,i}] - \sum_{l\in \mathcal{L}_{\pi}}\lambda_l\mathop{\mathbb{E}}[z_{l,f,i}]\right) \nonumber\\
    &= W_f(n, \pi-\pi^m, \tau-1)+W_f(m, \pi^m, \tau-1)-\lambda_{n\rightarrow m}.    \label{eq:DP1}
\end{align}

On the other hand, if the transmission fails, then, in the next time slot, node $n$ still has the delegated-set $\pi$ and packet $i$ has $\tau-1$ time slots before its deadline. Given that the transmission fails, we have
\begin{align}
    &\max  \left(\sum_{k\in \pi}U'_{k,f}(\mu_{k,f}^{\omega_k^+})\mathop{\mathbb{E}}[y_{k,f,i}] - \sum_{l\in \mathcal{L}_{\pi}}\lambda_l\mathop{\mathbb{E}}[z_{l,f,i}]\right)  \nonumber\\
    &= W_f(n, \pi, \tau-1)-\lambda_{n\rightarrow m}. \label{eq:DP2}
\end{align}

Since each transmission from $n$ to $m$ succeeds with probability $P_{n\rightarrow m}$, we have, given that $n$ transmits packet $i$ and assigns delegated-set $\pi^m$ to $m$, 
\begin{align}
    &\max  \left(\sum_{k\in \pi}U'_{k,f}(\mu_{k,f}^{\omega_k^+})\mathop{\mathbb{E}}[y_{k,f,i}] - \sum_{l\in \mathcal{L}_{\pi}}\lambda_l\mathop{\mathbb{E}}[z_{l,f,i}]\right) \nonumber\\
    &= P_{n\rightarrow m}\times\text{(\ref{eq:DP1})}+(1-P_{n\rightarrow m})\times\text{(\ref{eq:DP2})}.    \label{eq:DPfull}
\end{align}

% $$\max (\sum_{k\in \pi}r_{i,k}\mathop{\mathbb{E}}[y_{i,k}] - \sum_{l\in \mathcal{L}_{\pi}}\lambda_l\mathop{\mathbb{E}}[z_{i,l}])
% =P_{n\rightarrow m}\times$(\ref{eq:DP1})$+(1-P_{n\rightarrow m})\times$(\ref{eq:DP2}).

Based on the above analysis, we can write down the following iterative equation:
\begin{align}
    &W_f(n, \pi, \tau) =\max\{W_f(n, \pi, \tau-1), \nonumber\\
    &\max_{m, \pi^m:m\in \pi^m, \pi^m\subset\pi} [P_{n\rightarrow m}(W_f(n, \pi-\pi^m, \tau-1)\nonumber\\
    &+W_f(m, \pi^m, \tau-1))+(1-P_{n\rightarrow m})W_f(n, \pi, \tau-1)\nonumber\\
    &-\lambda_{n\rightarrow m}]\}, \label{eq:DP_eq}
\end{align}
with boundary condition
\begin{align}
    W_f(n, \pi, 0) = r_{i,n}=U'_{n,f}(\mu_{n,f}), \label{eq:DP_boundary}
\end{align}
where the term $W_f(n, \pi, \tau-1)$ in (\ref{eq:DP_eq}) represents the case when $n$ does not transmit the packet at all.  Eq. (\ref{eq:DP_eq}) and (\ref{eq:DP_boundary}) allows a dynamic programming algorithm to find $W_f(n, \pi, \tau)$ for all $f, n, \pi,$ and $\tau$.   
As we will show in Section~\ref{section:simulation}, our algorithm can be easily carried out in medium-sized networks.

\subsection{Index-DSR for Per-Time-Slot Link Constraint}
\label{section:index}

The Dynamic Program in (\ref{eq:DP_eq}) can be directly combined with the dual decomposition in Section~\ref{section:overview} to achieve the optimal solution of \textbf{R-TUM} problem under DSR.  In this section, we further propose an index policy that satisfies the per-time-slot link utilization constraint $\sum_{i,v} \epsilon_{i,v,l}(t)\leq T_l$, for all $t$, of the original \textbf{TUM} problem. The index-DSR policy would be to transmit the maximum number of packets among all possible packets to be transmitted so that the per-time-slot link constraint is not violated.

We make several changes to the dynamic program and the dual decomposition technique.  First, we change the iterative equation (\ref{eq:DP_eq}) to
\begin{align}
    &W_f(n, \pi, \tau) =\nonumber\\
    &\max_{m, \pi^m:m\in \pi^m, \pi^m\subset\pi} [P_{n\rightarrow m}(W_f(n, \pi-\pi^m, \tau-1)\nonumber\\
    &+W_f(m, \pi^m, \tau-1))+(1-P_{n\rightarrow m})W_f(n, \pi, \tau-1)\nonumber\\
    &-\lambda_{n\rightarrow m}]\}, \label{eq:DP_index_eq}
\end{align}
as long as there is a link from $n$ to another node in $\pi$, and 
\begin{align}
    &W_f(n, \pi, \tau) =W_f(n, \pi, \tau-1), \label{eq:DP_index_eq2}
\end{align}
otherwise. In other words, we force each node $n$ to find a link to transmit each packet.  We also define $m^*(n,\pi, \tau)$ and $\pi^{m*}(n,\pi, \tau)$ as the optimal $m$ and $\pi^m$ that achieves $W_f(n,\pi, \tau)$.  We note that, since we now force each node $n$ to find a link to transmit each packet, it is possible that $W_f(n, \pi, \tau)$ is negative for some $(n, \pi, \tau)$.

Second, in each time slot $t$ and for each link $n\rightarrow m$, we find all packets possessed by $n$ with delegated-set $\pi$, $\tau$ slots until their respective deadlines, and $m^*(n,\pi, \tau) = m$.  We sort these packets in descending order of $W_f(n,\pi, \tau)$, and let $\epsilon_{n\rightarrow m}'(t)$ be the number of these packets with $W_f(n,\pi, \tau)>0$.  In other words, $\epsilon_{n\rightarrow m}'(t)$ is the number of packets whose optimal strategy yields a positive return by transmitting over the link ${n\rightarrow m}$. After sorting these packets, link ${n\rightarrow m}$ simply transmit the first $T_{n\rightarrow m}$ packets.  Finally, the price of each link is updated by (\ref{eq:update}).

\section{Simulation Results}
\label{section:simulation}
\begin{figure*}[htbp]
\centering
\begin{minipage}{.32\textwidth}
\centering
\includegraphics[width=1.1\columnwidth]{./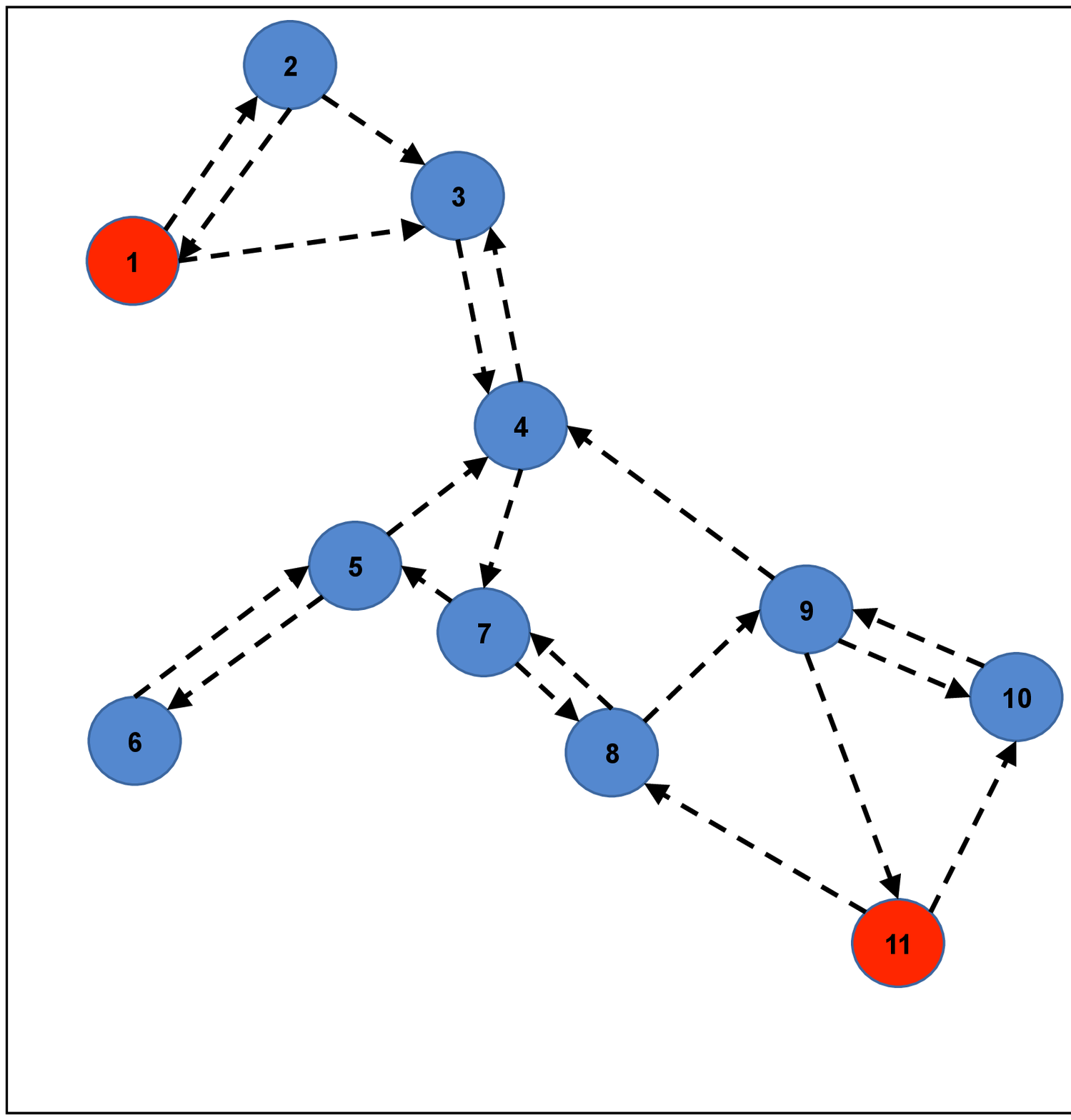}
\caption{Scenario 1: $11-$node network}
\label{fig:sc1}
\end{minipage}\hfill
\begin{minipage}{.32\textwidth}
\centering
\includegraphics[width=1.0\columnwidth]{./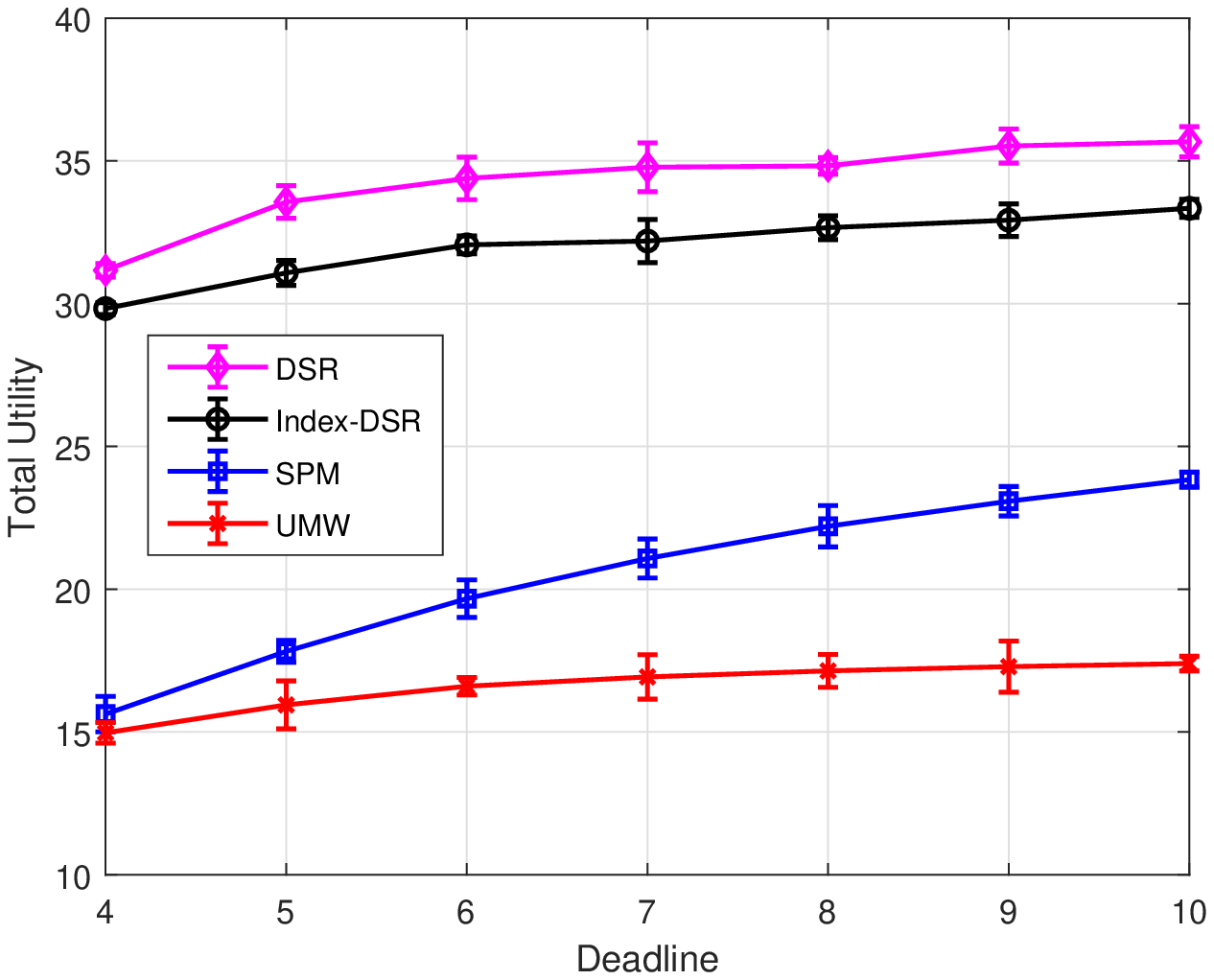}
\caption{Scenario 1: Linear Utility Function}% for scenario in figure (\ref{fig:sc1})}
\label{fig:sc1_lin}
\end{minipage}\hfill
\begin{minipage}{.32\textwidth}
\centering
\includegraphics[width=1.0\columnwidth]{./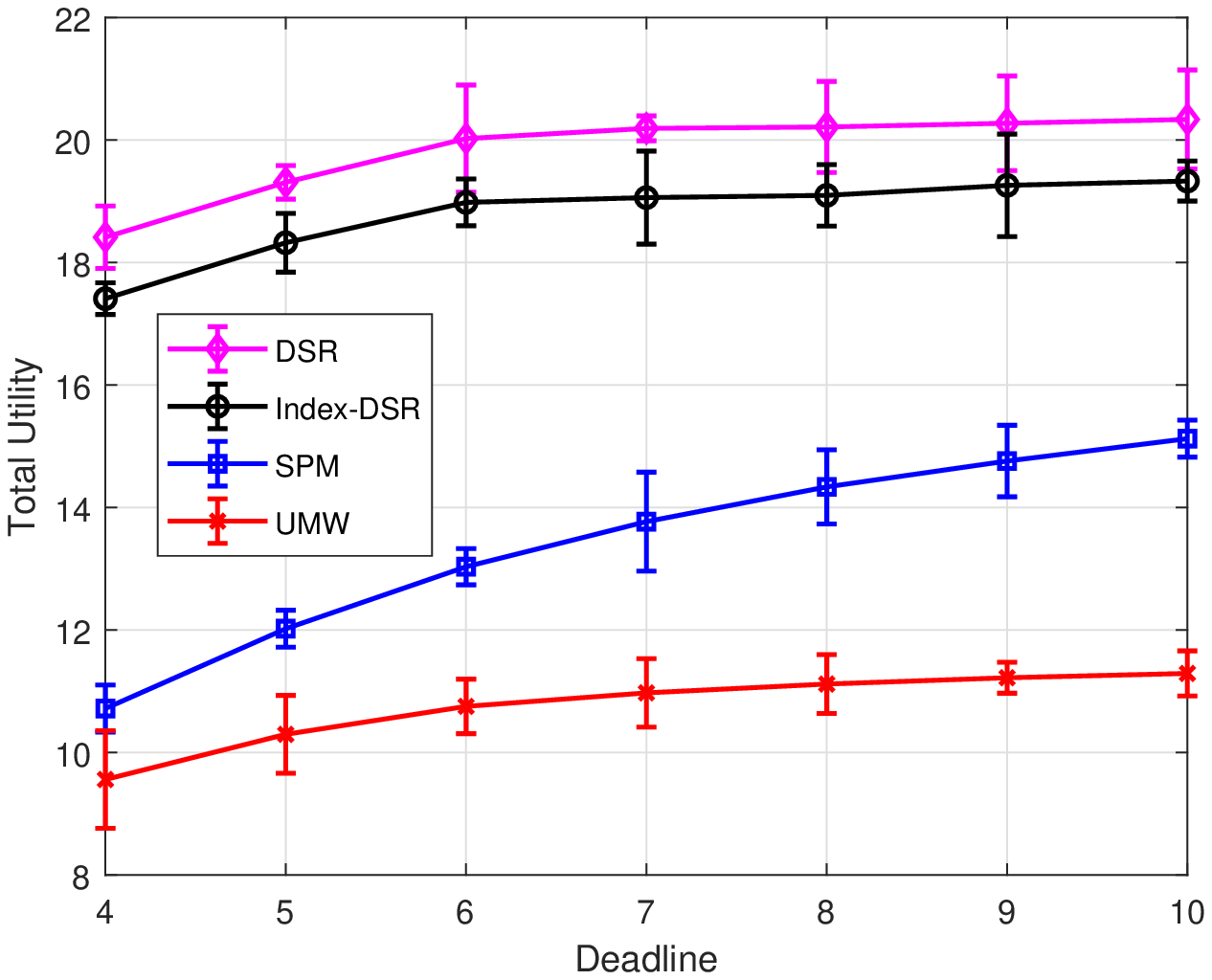}
\caption{Scenario 1: Logarithmic Utility Function}% for scenario in figure (\ref{fig:sc1})}% for odd and even vehicles.}
\label{fig:sc1_log}
\end{minipage}
%\vspace{0.1in}
\begin{minipage}{.32\textwidth}
\centering
\includegraphics[width=1.1\columnwidth]{./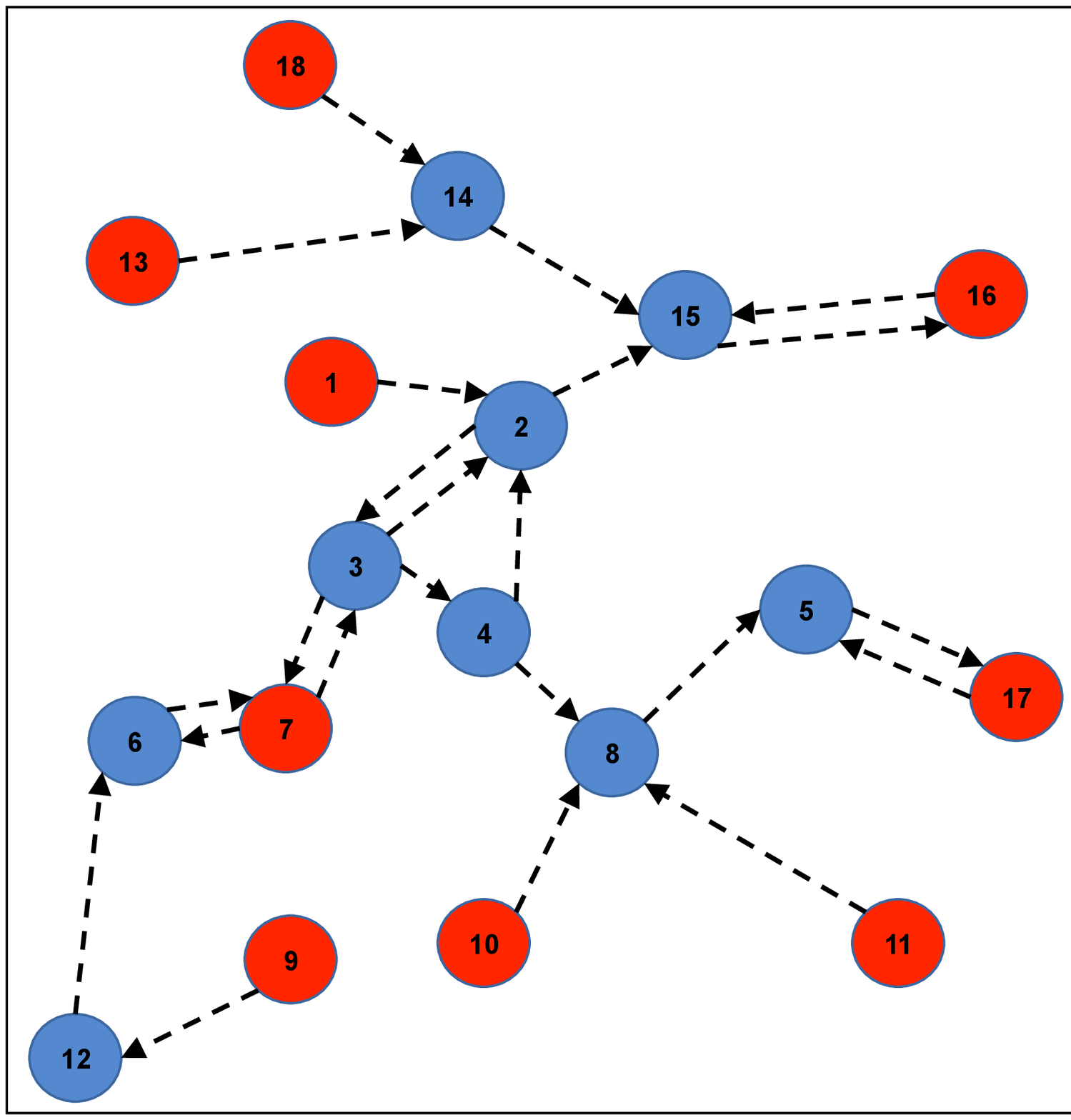}
\caption{Scenario 2: $18-$node network}
\label{fig:sc2}
\end{minipage}\hfill
\begin{minipage}{.32\textwidth}
\centering
\includegraphics[width=1.0\columnwidth]{./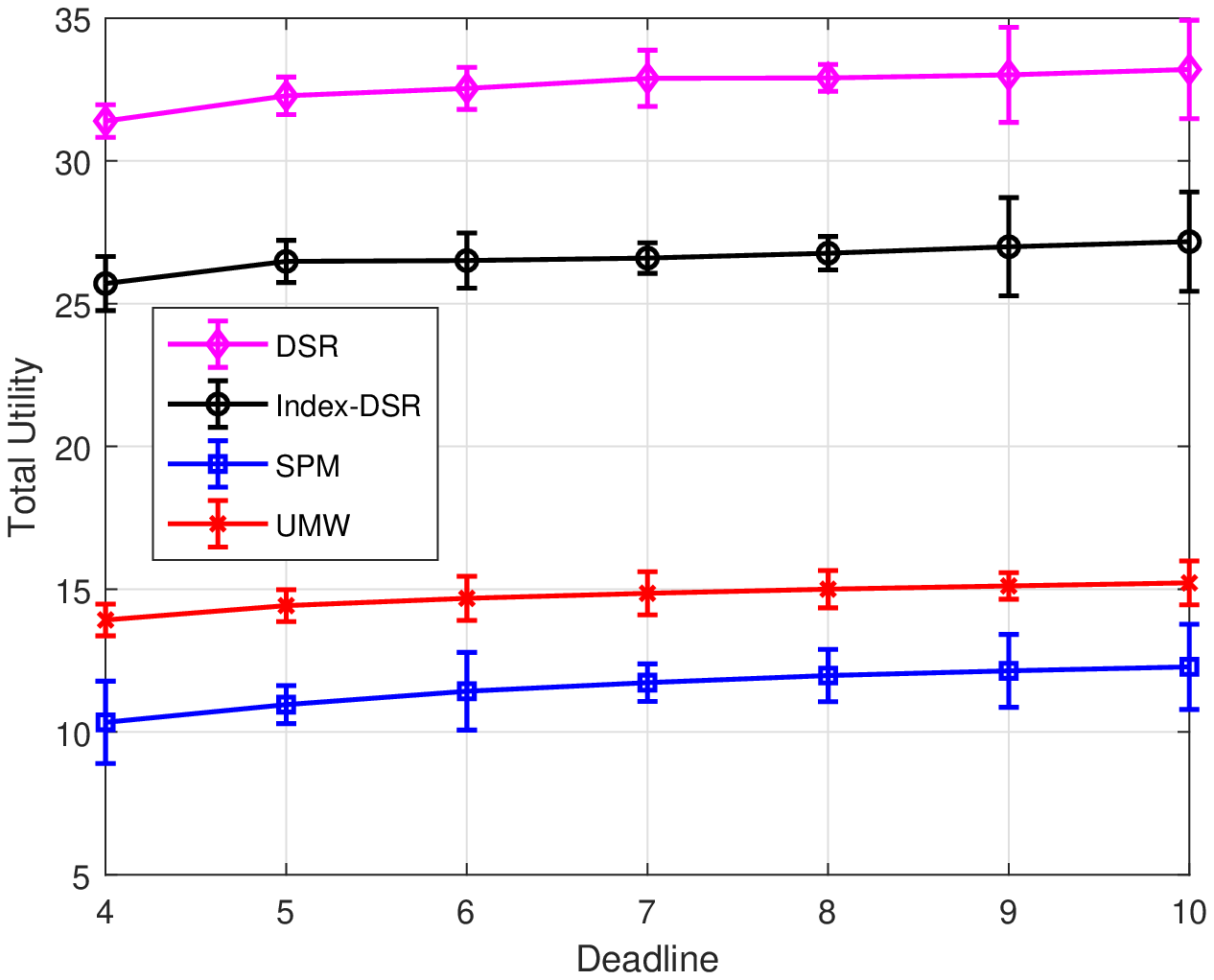}
\caption{Scenario 2: Linear Utility Function}% for scenario in figure (\ref{fig:sc2})}
\label{fig:sc2_lin}
\end{minipage}\hfill
\begin{minipage}{.32\textwidth}
\centering
\includegraphics[width=1.0\columnwidth]{./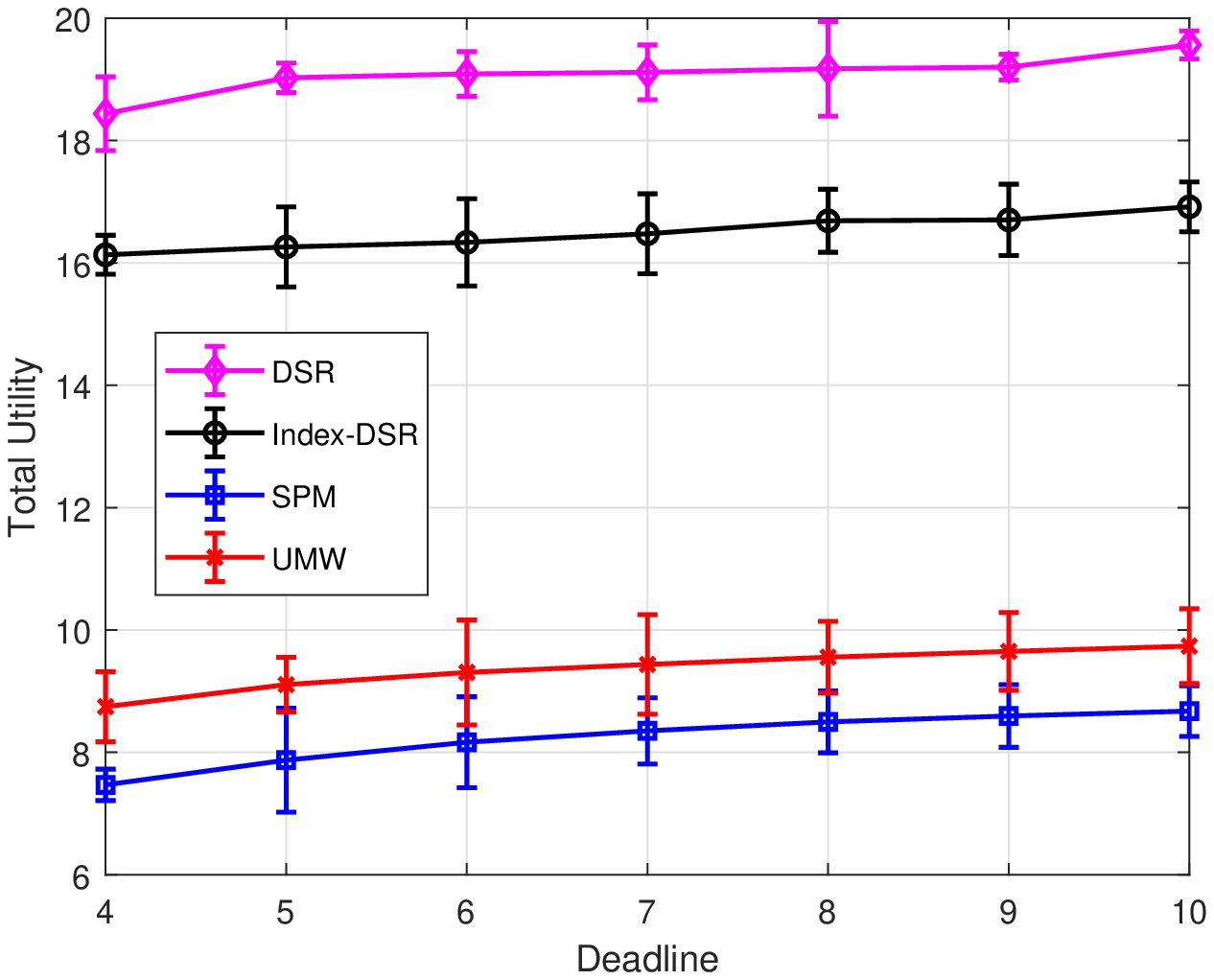}
\caption{Scenario 2: Logarithmic Utility Function}% for scenario in figure (\ref{fig:sc2})}% for odd and even vehicles.}
\label{fig:sc2_log}
\end{minipage}
\end{figure*}

In this section, we present simulation results that compare the performance of our policy against a policy proposed in \cite{umw} called Universal Max-Weight (UMW), and a policy proposed by Sinha, Paschos, and Modiano in \cite{Sinha16-2} that we call SPM.   We first provide a brief description of these two policies, and then present our simulation settings and results.

\subsection{Overview of UMW and SPM}

The UMW policy solves the problem of throughput-optimal packet dissemination in a network with arbitrary topology with different types of traffic, e.g., unicast, multicast and broadcast.  %UMW is considered to be an online dynamic policy consisting routing and packet scheduling.  
In both the centralized and distributed versions of UMW, the route of each packet is decided at the origin.  This route is a weighted tree that is constructed using the edge weights at time of decision at the origin.  Hence, if the route of the packet turns to be inappropriate during packet dissemination, it cannot be modified.   Although this policy also has a heuristic version that can be implemented in a distributed fashion, we consider our comparison against the centralized version, which has better performance than the distributed one.  %Besides, we only consider broadcast traffic in the network.   Furthermore, route decision at origin requires significant information exchange.

The SPM policy is designed specifically for throughput optimal broadcast.  SPM is a virtual-queue based algorithm, where virtual-queues are defined for subsets of nodes.   These virtual queues keep track of a kind of backpressure, while accounting for the fact that packets are duplicated in the broadcast regime.  A feature of this work is that each slot is sub-divided into $L$ minislots, where $L$ is the number of links in the network, and a random link is activated in each mini-slot.  Here, a packet may be retransmitted multiple times over the mini-slots comprising a slot (i.e., it could potentially reach all nodes in just one slot).  To ensure consistency with the slot model, we modify this algorithm to only allow packet state updates at each slot, rather than at each mini-slot.

% under a \emph{mini time slot} model.  Here, each time slot consists of $L$ mini time slots, where $L$ is the number of links.  In each mini time slot, the system randomly picks a link to transmit, and the link employs scheduling policies to determine which packets to transmit.  It is clear that the mini-slot model is not directly applicable to the actual network in which all links may transmit only once per time slot. However, they update buffer status at the end of each mini slot in their original implementation. This results in packet dissemination through the whole network in one slot, which is not applicable. Hence, we keep the mini slot transmission format, but update the status of buffers at the end of each time slot in our simulations. 

\subsection{Simulation Settings and Results}

In this study, we consider two different simulation scenarios motivated by designs for IAB network deployments~\cite{iab,iab2}.  Here, we have two kinds of nodes, namely, (i) gateway nodes with fiber drops (shown in red), and (ii)  wireless-only nodes with mm-wave backhaul (shown as blue nodes).   We assume that gateways communicate reliably between each other with zero latency, since they are connected to the same backend switch (consistent with IAB architecture).   The two scenarios represent different levels of gateway availability.    The first scenario is a an $11-$node network with $2$ fiber drops as in figure (\ref{fig:sc1}), while the second scenario is an $18-$ node IAB network with $9$ fiber drops as shown in figure (\ref{fig:sc2}).  Hence, Scenario~1 is illustrative performance in a network with multiple wireless hops, whereas Scenario~2 illustrates performance in a more densely connected network.

In both scenarios, there are two broadcast flows. One of the flows originates at a  fiber-connected gateway node, and the other one from a wireless-only node.  For each link $l$, $P_l$ is randomly chosen from $[0.5, 1.0],$ and $T_l$ is randomly chosen from $[1, 5]$.  Each flow generates packets according to a Poisson random process, where source node of flow $1$ has a mean arrival rate of $1.5$ packets per time slot, and source node of second flow has a mean arrival rate of $2$ packets per time slot. Since UWM and SPM only aim to maximize throughput, we first consider a linear utility function $U_{n,f}(\mu_{n,f})=\mu_{n,f}$ to make a fair comparison. In this case, the total utility of the system is the same as the total timely-throughputs.  In a second case, we also consider a logarithmic utility function $U_{n,f}(\mu_{n,f})=\log (\mu_{n,f}+1),$ which models the idea that the utility of the end user might be a non-negative, concave and increasing function of timely throughput.    We assume that the two flows have the same deadline of $D$ time slots, and vary $D$ from 4 to 10.  We test four the optimal DSR protocol (for the relaxed problem), the Index-DSR protocol, the UWM policy, and the SPM policy.

The simulation results for the linear utility function and the logarithmic utility function for scenario in figure (\ref{fig:sc1}) are shown in figures (\ref{fig:sc1_lin}) and (\ref{fig:sc1_log}), respectively.  The performance of DSR is an upper bound, since it is the optimal solution under a relaxed constraint.   The Index-DSR protocol outperforms UWM, possibly because of more dynamic routing of each packet under Index-DSR.  This also shows that UMW might be providing bursty service to nodes, since deadlines are often violated and packets are dropped, leading to poor throughput.   The Index-DSR policy outperforms SPM in all cases despite the assumption taht SPM can compute the reachable subgraph for each packet instantly.
%ability of SPM to know the reachable set of each packet instantly.  

The results for second IAB scenario, depicted in (\ref{fig:sc2}) shows similar results in terms performance of DSR-based algorithms for much the same reasons specified above.   
%The reasons of outperforming DSR and Index-DSR are two fold. The first reason is adaptability of DSR based algorithms against routing issues. Routing issues arise when the links are not reliable and the routes need to be corrected during packet dissemination. DSR based algorithms decide routing at each node. While, UMW decides the routes at the origin and sticks to it. On the other hand, SPM does not consider reliability of links in their algorithm.
%
%The second reason is triggering SPM policy. This policy is utilizing mini slot time format. In this case, a packet is transmitted in a mini slot. However, if the status of buffers are updated after each mini slot. Then, a packet is able to go through the whole network in a slot, which is the original implementation in their paper. But, in this comparison, we update the status of buffers at the end of each slot. Hence, that would be another reason why DSR based algorithms outperformed SPM and UMW in those networks.
%
However, results of figures (\ref{fig:sc2_lin}) and (\ref{fig:sc2_log}) shows that UMW has better performance than SPM, unlike the results obtained in (\ref{fig:sc1_lin}) and (\ref{fig:sc2_log}). This result appears to be due to the density of the network.  The UMW policy manages to deliver more unexpired packets to the destination since it has to traverse fewer hops.  SPM is also handicapped by the fact that we force it to obey a slot-by-slot state update model like all the other protocols (although we do allow it to utilize its minislot-based transmission model).   Ultimately, these results demonstrate the efficiency and flexibility of the DSR protocol.

\section{Conclusion}
\label{section:conclusion} 
In this paper, we studied the problem of broadcasting real-time flows with hard per-packet deadlines in a multi-hop wireless network.  
This problem is computationally complex because of the need to solve an MDP over the network graph.
We relax the problem by using average link utilization constraints, and come up with a novel decomposition approach that enables its solution in a distributed fashion.    
We propose the {DSR} algorithm that maximizes the total timely-utility, ensuring that delay guarantees are met.  
The algorithm has a low complexity, and has a really low overhead for coordination among nodes. 
We also develop a simple \emph{index} policy based on DSR that is able to meet hard link utilization constraints.     
%Furthermore, it is shown it achieves higher amount of total utility compared to SPM both theoretically and by simulation results.  
We simulate the variants of the algorithm with different utility functions, comparing against several recent algorithms that are throughput optimal.  
In all cases, {DSR} and the index policy have a better performance in terms of total timely-utility.  
The results indicate that design throughput and delay optimality are fundamentally different, but that it is possible to come up with low complexity and near-optimal solutions in the delay-constrained case.

%Then, \textit{DSR} is compared with SPM with two different utility functions.  
%In both cases, \textit{DSR} has a better performance in terms of total utility.  However, \textbf{Relaxed Utility} policy achieves higher amount of total utility in network.

\bibliographystyle{ieeetr}
\bibliography{reference}

\begin{thebibliography}{10}

\bibitem{iab}
M.~N. Islam, N.~Abedini, G.~Hampel, S.~Subramanian, and J.~Li, ``Investigation
  of performance in integrated access and backhaul networks,'' {\em arXiv
  preprint arXiv:1804.00312}, 2018.

\bibitem{iab2}
M.~N. Islam, S.~Subramanian, and A.~Sampath, ``Integrated access backhaul in
  millimeter wave networks,'' in {\em Wireless Communications and Networking
  Conference (WCNC), 2017 IEEE}, pp.~1--6, IEEE, 2017.

\bibitem{Ho99}
C.~Ho, K.~Obraczka, G.~Tsudik, and K.~Viswanath, ``Flooding for reliable
  multicast in multi-hop ad hoc networks,'' in {\em In Proceedings of the
  International Workshop on Discrete Algorithms and Methods for Mobile
  Computing and Communications (DIALM)}, 1999.

\bibitem{Johnson01}
J.~Jetcheva, Y.~Hu, D.~Maltz, and D.~Johnson, ``A simple protocol for multicast
  and broadcast in mobile ad hoc networks,'' in {\em In Internet Draft,
  draft-ietf-nanet-simple-mbcast-00.txt}, June 2001.

\bibitem{Ni99}
S.~Ni, Y.~Tseng, Y.~Chen, and J.~Sheu, ``The broadcast storm problem in a
  mobile ad hoc network,'' in {\em In Proceedings of the fifth annual ACM/IEEE
  international conference on Mobile computing and networking}, pp.~151 -- 162,
  1999.

\bibitem{Gandhi03}
R.~Gandhi, S.~Parthasarathy, and A.~Mishra, ``Minimizing broadcast latency and
  redundancy in ad hoc networks,'' in {\em in Proceedings of 4th ACM
  international symposium on Mobile ad hoc networking and computing}, pp.~222
  -- 232, 2003.

\bibitem{Huang07}
S.~C. Huang, P.-J. Wan, X.~Jia, H.~Du, and W.~Shang, ``Minimum-latency
  broadcast scheduling in wireless ad hoc networks,'' in {\em Proc. of IEEE
  INFOCOM}, 2007.

\bibitem{Tass02}
S.~Sarkar and L.~Tassiulas, ``A framework for routing and congestion control
  for multicast information flows,'' {\em IEEE Transactions on Information
  Theory}, vol.~48, no.~10, pp.~2690 -- 2708, 2002.

\bibitem{Ho05}
T.~Ho and H.~Viswanathan, ``Dynamic algorithms for multicast with intra-session
  network coding,'' in {\em In Proc. 43rd Annual Allerton Conference on
  Communication Control, and Computing}, 2005.

\bibitem{Yuan06}
J.~Yuan, Z.~Li, W.~Yu, and B.~Li, ``A cross-layer optimization framework for
  multihop multicast in wireless mesh networks,'' {\em Selected Areas in
  Communications, IEEE Journal on}, vol.~24, no.~11, 2006.

\bibitem{Zhang13}
S.~Zhang, M.~Chen, Z.~Li, and L.~Huang, ``Optimal distributed broadcasting with
  per-neighbor queues in acyclic overlay networks with arbitrary underlay
  capacity constraints,'' in {\em In Information Theory Proceedings (ISIT),
  2013 IEEE International Symposium on}, pp.~814 -- 818, 2013.

\bibitem{Sinha16}
A.~Sinha, L.~Tassiulas, and E.~Modiano, ``Throughput-optimal broadcast in
  wireless networks with dynamic topology,'' in {\em Proceedings of the 17th
  ACM International Symposium on Mobile Ad Hoc Networking and Computing},
  pp.~21 -- 30, July 2016.

\bibitem{Sinha16-2}
A.~Sinha, G.~Paschos, and E.~Modiano, ``Throughput-optimal multi-hop broadcast
  algorithms,'' in {\em Proceedings of the 17th ACM International Symposium on
  Mobile Ad Hoc Networking and Computing}, pp.~51 -- 60, July 2016.

\bibitem{Xiong11}
H.~Xiong, R.~Li, A.~Eryilmaz, and E.~Ekici, ``Delay-aware cross-layer design
  for network utility maximization in multi-hop networks,'' {\em Selected Areas
  in Communications, IEEE Journal on}, vol.~29, pp.~951 -- 959, 2011.

\bibitem{Mao13}
Z.~Mao, C.~Koksal, and N.~Shroff, ``Online packet scheduling with hard
  deadlines in multihop communication networks,'' in {\em Proc. of IEEE
  INFOCOM}, pp.~2463 -- 2471, 2013.

\bibitem{Li12}
R.~Li and A.~Eryilmaz, ``Scheduling end-to-end deadline-constrained traffic
  with reliability requirements in multi-hop networks,'' {\em Selected Areas in
  Communications, IEEE Journal on}, vol.~20, pp.~1649 -- 1662, Otc 2012.

\bibitem{Kumar15}
R.~Singh and P.~R. Kumar, ``Optimizing quality of experience of dynamic video
  streaming over fading wireless networks,'' in {\em In Proceedings of 54th
  IEEE Conference on Decision and Control}, Dec 2015.

\bibitem{misc}
{Aria HasanzadeZonuzy, I-Hong Hou and Srinivas Shakkottai}, ``{Broadcasting
  Real-Time Flows in Integrated Backhaul and Access 5G Networks}.''
  \url{https://www.dropbox.com/s/zsiz4kio9i4r9mg/e2e_broadcast.pdf?dl=0}, 2019.

\bibitem{boyd2004convex}
S.~Boyd and L.~Vandenberghe, {\em Convex optimization}.
\newblock Cambridge university press, 2004.

\bibitem{theorem}
M.~S. Bazaraa, H.~D. Sherali, and C.~M. Shetty, {\em Nonlinear programming:
  theory and algorithms}.
\newblock John Wiley \& Sons, 2013.

\bibitem{umw}
A.~Sinha and E.~Modiano, ``Optimal control for generalized network-flow
  problems,'' {\em IEEE/ACM Transactions on Networking (TON)}, vol.~26, no.~1,
  pp.~506--519, 2018.

\end{thebibliography}

\clearpage

\section{Appendix}
\label{sec:apen}

\subsection{Proof of Lemma \ref{lem:gamma}}

Consider two epoch-wise stationary policies $\omega_1^+$, $\omega_2^+$, and a number $0<a<1$, we will show that there exists an epoch-wise stationary policy $\omega_a^+$ such that $\gamma^{\omega_a^+}=a\gamma^{\omega_1^+}+(1-a)\gamma^{\omega_2^+}$.

We construct $\omega_a^+$ as follows: In epoch $i$, if $\lfloor ai\rfloor>\lfloor a(i-1)\rfloor$, then $\omega_a^+$ uses $\omega_1[\lfloor ai\rfloor]$ in epoch $i$. Otherwise, $\omega_a^+$ uses $\omega_2[i-\lfloor ai\rfloor]$ in epoch $i$.

It is straightforward to check that, in the first $I$ epochs, $\omega_a^+$ consists of the first $\lfloor aI \rfloor$ stationary policies from $\omega_1^+$ and the first $\lceil (1-a)I\rceil$ stationary policies from $\omega_2^+$. Therefore, $\gamma^{\omega_a^+}=a\gamma^{\omega_1^+}+(1-a)\gamma^{\omega_2^+}$.

\subsection{Proof of Theorem \ref{th:rel_utility}}

Since \textbf{Relaxed Utility} is convex, we only need to check the Slater's condition:
\begin{enumerate}
\item $\Gamma \neq \emptyset$.
\item Constraint (\ref{eq:relaxed3}) is a linear inequality. 
\item Consider the policy that never transmits any packets. Under this policy, the number of transmissions over link $l$ is 0, which is strictly less than $T_l$, for all $l$.
\end{enumerate}
Therefore, Slater's condition holds, and the proof is complete.

\subsection{Proof of Theorem \ref{th:lambda}}

For any arbitrary $\lambda'$:
\begin{align*}
&\mathcal{D}(\lambda') = \max_{\gamma} \mathcal{L}(\gamma,\lambda') \geq \mathcal{L}(\gamma(\lambda),\lambda')\\ =&\sum_{n=1}^N\sum_{f=1}^FU_{n,f}(\mu_{n,f}(\lambda))-\sum_{l=1}^L\lambda'_l(\sum_{f=1}^F\bar{\epsilon}_{l,f}(\lambda)- T_l)\\
=&\sum_{n=1}^N\sum_{f=1}^FU_{n,f}(\mu_{n,f}(\lambda))-\sum_{l=1}^L\lambda_l(\sum_{f=1}^F\bar{\epsilon}_{l,f}(\lambda)- T_l)\\
&+\sum_{l=1}^L(\lambda_l-\lambda'_l)(\sum_{f=1}^F\bar{\epsilon}_{l,f}(\lambda)- T_l)\\
=& \mathcal{D}(\lambda)+(\lambda'-\lambda)\cdot [T_l-\sum_{f=1}^F\bar{\epsilon}_{l,f}(\lambda)].
\end{align*}
Thus, $[T_l-\sum_{f=1}^F\bar{\epsilon}_{l,f}(\lambda)]$ is a subgradient of $\mathcal{D}(\lambda)$.

\end{document}